\newcommand{\tr}{\textrm{tr}}
\newcommand{\diag}{\operatorname{diag}}
\newtheorem{thm}{Theorem}
\newtheorem{lem}{Lemma}
\newcommand{\Ptilde}{\widetilde P}
\newcommand{\vecop}{\operatorname{vec}}
\newcommand{\vect}[1]{\mathbf{#1}}
\newcommand{\CN}{\mathcal{CN}}
\newcommand{\MMSE}[1]{#1_{\mathrm{MMSE}}}
\newcommand{\MVU}[1]{#1_{\mathrm{MVU}}}
\newcommand{\Htilde}{\widetilde{\vect{H}}}
\newcommand{\Hhat}{\widehat{\vect{H}}}
\newcommand{\ejo}{e^{j \omega}}
\newcommand{\E}{\mathbb{E}}
\newcommand{\rank}{\mathrm{rank}}
\newcommand{\minimize}[1]{{\underset{{#1}}{\mathrm{minimize}}}}
\newcommand{\maximize}[1]{{\underset{{#1}}{\mathrm{maximize}}}}
\newcommand{\Ical}{\boldsymbol{\mathcal{I}}}
\begin{document}

\title{Training Sequence Design for MIMO Channels:\\ An Application-Oriented Approach}
\author{Dimitrios Katselis,$^{\dag}$\thanks{$^{\dag}$ Equally contributing authors. The order of their names is alphabetical.}$^{\star}$\thanks{$^{\star}$ Corresponding author. Email: {dimitrik@kth.se}.} Cristian R. Rojas,$^{\dag}$ Mats Bengtsson, Emil Björnson, Xavier
  Bombois, Nafiseh Shariati, Magnus Jansson and Håkan Hjalmarsson
\thanks{D.\ Katselis, C.\ R.\ Rojas, M.\ Bengtsson, E.\ Björnson, N.\ Shariati, M.\ Jansson and H.\ Hjalmarsson are with ACCESS Linnaeus Center, School of
Electrical Engineering, KTH Royal Institute of Technology, SE
100-44, Stockholm, Sweden. E-mail: dimitrik@kth.se, cristian.rojas@ee.kth.se, mats.bengtsson@ee.kth.se, emil.bjornson@ee.kth.se, nafiseh@kth.se, magnus.jansson@ee.kth.se, hjalmars@kth.se.}
\thanks{X.\ Bombois is with Delft Center for Systems and Control, Delft University of Technology, Mekelweg 2, 2628 CD, Delft, The Netherlands. E-mail: X.J.A.Bombois@tudelft.nl.}
\thanks{This work was partially supported by the Swedish Research Council under contract 621-2010-5819.}
}


\maketitle
\begin{abstract}
In this paper, the problem of training optimization for estimating
a multiple-input multiple-output (MIMO) flat fading channel in the
presence of spatially and temporally correlated Gaussian noise is
studied in an application-oriented setup. So far,
the problem of MIMO channel estimation has mostly been treated within the
context of minimizing the mean square error (MSE) 
of the channel estimate subject to various constraints,
such as an upper bound on the available training energy. 
We introduce a more general framework for the task of training
sequence design in MIMO systems, which can treat not only the
minimization of channel estimator's MSE, but also the optimization of a
final performance metric of interest related to the use of the
channel estimate in the communication system. First, we show that the proposed
framework can be used to minimize the training energy budget subject to a
quality constraint on the MSE of the channel estimator. A deterministic
version of the ``dual'' problem is also provided. We then focus on four specific
applications, where the training sequence can be optimized with
respect to the classical channel estimation MSE, a weighted channel estimation MSE and the MSE of the equalization error due to the use of an
equalizer at the receiver or an appropriate linear precoder at the
transmitter. In this way, the intended use of the channel estimate is explicitly
accounted for. The superiority of the proposed designs over existing methods is demonstrated via numerical simulations.
\end{abstract}

\begin{keywords}
Channel equalization, L-optimality criterion, MIMO channels, system identification,
training sequence design. 
\end{keywords}

\IEEEpeerreviewmaketitle


\section{Introduction}
\PARstart{A}{n} important factor in the performance of multiple
antenna systems is the accuracy of the channel state information
(CSI)~\cite{TarokhNSC:99}. CSI is primarily used at the receiver
side for purposes of coherent or semi-coherent detection, but it
can be also used at the transmitter side, e.g., for precoding and
adaptive modulation. 
Since in
communication systems the maximization of spectral efficiency is an
 objective of interest, the training duration and energy
should be minimized. Most current systems use training signals
that are white, both spatially and temporally, which is known to
be a good choice according to
several criteria~\cite{StoicaB:03,HassibiH:03}. However, in case that some prior knowledge of
the channel or noise statistics is available, it is possible to
tailor the training signal and to obtain a significantly improved
performance. Especially, several authors have studied scenarios
where long-term CSI in the form of a covariance matrix over the
short-term fading is available. So far, most proposed algorithms
have been designed to minimize the squared error of the channel
estimate,
e.g.,~\cite{KotechaS:04,WongP:04,Biguesh:06,Liu:07,KatselisKT:07,Bjornson:10}.
Alternative design criteria are used in \cite{WongP:04} and
\cite{Biguesh:09}, where the channel entropy is minimized given
the received training signal. In \cite{VikaloHHK:04}, the
resulting capacity in the case of a single-input single-output
(SISO) channel is considered, while \cite{AhmedTS:05} focuses on the
pairwise error probability.

Herein, a generic context is described, drawing from similar
techniques that have been recently proposed for training signal
design in system identification
\cite{Jansson:04a,Geversetal:06a,Hjalmarsson:09}. This context
aims at providing a unified theoretical framework, that can be
used to treat the MIMO training optimization problem in various
scenarios. Furthermore, it provides a different way of looking at
the aforementioned problem, that could be adjusted to a wide variety of estimation-related problems in communication systems.
First, we show how the problem of minimizing the training energy
subject to a quality constraint can be solved, while a ``dual'' deterministic (average design)
problem is considered\footnote{The word ``dual'' in this paper defers from the Lagrangian duality studied in the context of convex optimization theory. See  \cite{Rojas&Aguero&al:08aut} for more details on this type of duality.}. In the sequel, we
show that by a suitable definition of the performance measure the problem of optimizing the training for minimizing the
channel MSE can be treated as a special case. We also consider a weighted version of the channel MSE,
which relates to the well-known L-optimality criterion~\cite{Kiefer-74}. Moreover, we explicitly consider how
the channel estimate will be used and attempt to optimize the end
performance of the data transmission, which is not necessarily
equivalent to minimizing the mean square error (MSE) of the
channel estimate. Specifically, we study two uses of the
channel estimate: channel equalization at the receiver using a
minimum mean square error (MMSE) equalizer and channel inversion
(zero-forcing precoding) at the transmitter, and derive the
corresponding optimal training signals for each case. In the case of MMSE equalization, separate
approximations are provided for the high and low SNR regimes. Finally, the resulting
performance is illustrated based on numerical simulations. Compared
to related results in the control
literature, here we directly design a finite-length training
signal and consider not only deterministic channel parameters, but
also a Bayesian channel estimation framework. A related pilot
design strategy has been proposed in \cite{CiblatBG:08} for the
problem of jointly estimating the frequency offset and the channel
impulse response in single antenna transmissions.

Implementing an adaptive choice of pilot signals in a
practical system would require a feedback signalling overhead,
since both the transmitter and the receiver have to agree on the
choice of the pilots. Just as previous studies in the area, the
current paper is primarily intended to provide a theoretical
benchmark on the resulting performance of such a scheme. Directly
considering the end performance in the pilot design is a step into
making the results more relevant. The data model used in
\cite{KotechaS:04,WongP:04,Biguesh:06,Liu:07,KatselisKT:07,Bjornson:10,Biguesh:09}
is based on a questionable assumption, namely that the channel is
frequency flat, but that the noise is allowed to be frequency
selective. Such an assumption might be relevant in systems that
share spectrum with other radio interfaces using a narrower
bandwidth and possibly in situations where channel coding
introduces a temporal correlation in interfering signals. In order
to focus on the main principles of our proposed strategy and to
keep the mathematical derivations as simple as possible, the same
model has been used in the current paper.

As a final comment, the novelty of this paper is on introducing the application-oriented
framework as the appropriate context for training sequence design in communication systems.
To this end, Hermitian form-like approximations of performance metrics are addressed here because they usually are good approximations
of many performance metrics of interest, as well as, for simplicity purposes
and comprehensiveness of presentation. To illustrate the framework, we have for simplicity chosen to study performance metrics related to the MSE of the information carrying signal after equalization. Directly designing for performance metrics like bit error rate (BER) would be even more relevant but would involve more technical complications. Also, the BER is with good approximation monotonically increasing in the MSE of the input to the detector and we illustrate numerically that our design outperforms previous state-of-the-art also in terms of BER.

This paper is organized as follows: Section \ref{sec:model} introduces the basic MIMO received signal
model and specific assumptions on the structure of channel and noise covariance matrices. Section \ref{sec:OptEst} presents the
optimal channel estimators, when the channel is considered to be either a deterministic or a random matrix. Section \ref{sec:OptTrain} 
presents the application-oriented optimal training designs in a guaranteed performance context, based on confidence ellipsoids
and Markov bound relaxations. Moreover, Section \ref{sec:applic} focuses on four specific applications, namely that of MSE channel estimation,
channel estimation based on the L-optimality criterion and finally channel estimation for MMSE equalization and ZF precoding. Numerical simulations are provided
in Section \ref{sec:sims}, while Section \ref{sec:concl} concludes this paper.

\textbf{Notations}: Boldface (lower case) is used for column
vectors, $\vect{x}$, and (upper case) for matrices, $\vect{X}$.
Moreover, $\vect{X}^T$, $\vect{X}^H$, $\vect{X}^*$ and
$\vect{X}^\dagger$ denote the transpose, the conjugate transpose,
the conjugate and the Moore-Penrose pseudoinverse of $\vect{X}$,
respectively. The trace of $\vect{X}$ is denoted as
$\tr(\vect{X})$ and $\vect{A} \succeq \vect{B}$ means that
$\vect{A}-\vect{B}$ is positive semidefinite. $\vecop(\vect{X})$ is the vector produced by
stacking the columns of $\vect{X}$, and $(\vect{X})_{i,j}$ is the
$(i,j)$-th element of $\vect{X}$. $[\vect{X}]_+$ means that all
negative eigenvalues of $\vect{X}$ are replaced by zeros (i.e., $[\vect{X}]_+
\succeq \vect{0}$).
$\mathcal{CN}(\bar{\vect{x}},\vect{Q})$ stands for circularly
symmetric complex Gaussian random vectors, where $\bar{\vect{x}}$
is the mean and $\vect{Q}$ the covariance matrix. Finally, $\alpha!$ denotes the factorial
of the nonnegative integer $\alpha$ and ${\rm mod}(a,b)$ the modulo operation between the integers $a,b$.

\section{System Model}
\label{sec:model} We consider a MIMO communication system with
$n_T$ antennas at the transmitter and $n_R$ antennas at the
receiver.  The received signal at time $t$ is modelled as
$$
\vect{y}(t) = \vect{H} \vect{x}(t) + \vect{n}(t)
$$
where $\vect{x}(t)\in\mathbb{C}^{n_T}$ and $\vect{y}(t) \in
\mathbb{C}^{n_R}$ are the baseband representations of the
transmitted and received signals, respectively. The impact of
background noise and interference from adjacent communication
links is represented by the additive term $\vect{n}(t) \in
\mathbb{C}^{n_R}$. We will further assume that $\vect{x}(t)$ and
$\vect{n}(t)$ are independent (weakly) stationary signals. The channel
response is modeled by $\vect{H} \in \mathbb{C}^{n_R\times n_T}$,
which is assumed constant during the transmission of one block of
data and independent between blocks; that is, we are assuming
frequency flat block fading. Two different models of the channel
will be considered:
\begin{itemize}
\item[i)] A deterministic model. \item[ii)]A stochastic Rayleigh
fading model\footnote{For simplicity, we have assumed a zero-mean
channel, but it is straightforward to extend the results to Rician
fading channels, similarly to~\cite{Bjornson:10}.}, i.e., $\vecop(\vect{H})\in \CN
(\vect{0},\vect{R})$, where, for mathematical tractability, we
will assume that the known covariance matrix $\vect{R}$ possesses
the Kronecker model used, e.g., in \cite{Liu:07,Biguesh:09}:
\begin{align}
\vect{R}&=\vect{R}_T^T\otimes \vect{R}_R
\end{align}
where $\vect{R}_T\in\mathbb{C}^{n_T\times n_T}$ and
$\vect{R}_R\in\mathbb{C}^{n_R\times n_R}$ are the spatial
covariance matrices at the transmitter and receiver side,
respectively. This model has been experimentally verified in
\cite{Kermoal:02,Yu2BOMKB:04} and further motivated in \cite{Gazor:06,Rad:08}.
\end{itemize}

We consider training signals of arbitrary length $B$, represented
by $\vect{P} \in \mathbb{C}^{n_T \times B}$, whose columns are the
transmitted signal vectors during training. Placing the received
vectors in $\vect{Y}=\begin{bmatrix}\vect{y}(1) & \ldots &
  \vect{y}(B)\end{bmatrix} \in \mathbb{C}^{n_R
\times B}$, we have
$$
\vect{Y}=\vect{H} \vect{P}+\vect{N},
$$
where $\vect{N}=\begin{bmatrix}\vect{n}(1) & \ldots &
  \vect{n}(B)\end{bmatrix} \in \mathbb{C}^{n_R
\times B}$ is the combined noise and interference matrix.

Defining $\vect{\Ptilde} = \vect{P}^T\otimes \vect{I}$, we can
then write
\begin{align}
\label{m} \vecop(\vect{Y})=\vect{\Ptilde} \vecop(\vect{H})
+\vecop(\vect{N}).
\end{align}
As, for example, in \cite{Biguesh:09,Liu:07}, we assume that
$\vecop(\vect{N})\in \CN(\vect{0},\vect{S})$, where the covariance
matrix $\vect{S}$ also possesses a Kronecker structure
\begin{align}
\vect{S}&=\vect{S}_Q^T\otimes \vect{S}_R.
\end{align}
Here, $\vect{S}_Q\in\mathbb{C}^{B\times B}$ represents the
temporal covariance matrix\footnote{We set the subscript $Q$ to
$\vect{S}_Q$ to highlight its temporal nature and the fact that
its size is $B\times B$. The matrices with subscript $T$ in this
paper share the common characteristic that they are $n_T\times
n_T$, while those with subscript $R$ are $n_R\times n_R$.} and
$\vect{S}_R\in\mathbb{C}^{n_R\times
n_R}$ represents the received spatial covariance matrix.\\

The channel and noise statistics will be assumed known to the
receiver during estimation. Statistics can often be achieved by
long-term estimation and
tracking \cite{WernerJ:09}.\\

For the data transmission phase, we will assume that the transmit
signal $\{\vect{x}(t)\}$ is a zero-mean, weakly stationary
process, which is both temporally and spatially white, i.e., its
spectrum is $\vect{\Phi}_x(\omega) = \lambda_x \vect{I}$.

\section{Channel Matrix Estimation}
\label{sec:OptEst}

\subsection{Deterministic Channel Estimation}

The minimum variance unbiased (MVU) channel estimator for the
signal model \eqref{m}, subject to a deterministic channel
(Assumption i) in Section \ref{sec:model}), is given by~\cite{Kay:93}
\begin{equation}
\label{MVU}
\begin{split}
\vecop(\MVU{\widehat{\vect{H}}}) &= (\vect{\Ptilde}^H
\vect{S}^{-1} \vect{\Ptilde})^{-1}
 \vect{\Ptilde}^H \vect{S}^{-1}
\vecop(\vect{Y}).
\end{split}
\end{equation}
This estimate has the distribution
\begin{align}
\label{MVUHdist} \vecop(\MVU{\widehat{\vect{H}}}) \in \CN
(\vecop(\vect{H}), \Ical_{\text{F,MVU}}^{-1}),
\end{align}
where $\Ical_{\text{F,MVU}}$ is the inverse covariance matrix
\begin{equation}\label{fim1}
\Ical_{\text{F,MVU}} = \vect{\Ptilde}^H \vect{S}^{-1}
\vect{\Ptilde}.
\end{equation}
From this, it follows that the estimation error $\widetilde{\vect
H} \triangleq \MVU{\widehat{\vect{H}}}-\vect{H}$  will, with
probability $\alpha$, belong to the uncertainty set
\begin{equation}\label{DD}
{\cal D}_D= \left\{ \widetilde{\vect H} \ :\ \vecop^H( \widetilde{\vect
H})\Ical_{\text{F,MVU}}\vecop(\widetilde{\vect H}) \leq
\frac{1}{2} \chi^2_\alpha(2n_Tn_R) \right\},
\end{equation}
where $\chi^2_\alpha(n)$ is the $\alpha$ percentile of the
$\chi^2(n)$ distribution~\cite{Hjalmarsson:09}.

\subsection{Bayesian Channel Estimation}
\label{sec:bayes} For the case of a stochastic channel model
(Assumption ii) in Section \ref{sec:model}), the posterior channel
distribution  becomes (see \cite{Kay:93})
\begin{equation} \label{eq_channel_information}
\vecop(\vect{H}) | \vect{Y},\vect{P} \in
\mathcal{CN}(\vecop(\MMSE{\widehat{\vect{H}}}),\MMSE{\vect{C}}),
\end{equation}
where the first and second moments are
\begin{equation}
\label{MMSE}
\begin{split}
\vecop(\MMSE{\widehat{\vect{H}}}) &= (\vect{R}^{-1} +
\vect{\Ptilde}^H \vect{S}^{-1} \vect{\Ptilde})^{-1}
 \vect{\Ptilde}^H \vect{S}^{-1}
\vecop(\vect{Y}), \\
\MMSE{\vect{C}} & = (\vect{R}^{-1} + \vect{\Ptilde}^H
\vect{S}^{-1} \vect{\Ptilde})^{-1}.
\end{split}
\end{equation}
Thus, the estimation error $\widetilde{\vect H} \triangleq
\MMSE{\widehat{\vect{H}}}-\vect{H}$  will, with probability
$\alpha$, belong to the uncertainty set
\begin{equation}\label{DB}
{\cal D}_B= \left\{ \widetilde{\vect H} \ :\ \vecop^H( \widetilde{\vect
H})\Ical_{\text{F,MMSE}}\vecop(\widetilde{\vect H}) \leq
\frac{1}{2} \chi^2_\alpha(2n_Tn_R) \right\},
\end{equation}
where $\Ical_{\text{F,MMSE}} \triangleq \MMSE{\vect{C}}^{-1}$ is
the inverse covariance matrix in the MMSE case~\cite{Hjalmarsson:09}.

\section{Application-Oriented Optimal Training Design}
\label{sec:OptTrain}

In a communication system, an estimate of the channel, say
$\widehat{\vect{H}}$, is needed at the receiver to detect the data
symbols and may also be used at the transmitter to improve the
performance. Let
$J(\widetilde{\vect H},\vect{H})$ be a scalar measure of the
performance degradation at the receiver due to the estimation
error $\widetilde{\vect H}$ for a channel $\vect{H}$.
The objective of the training signal design is then to ensure that
the resulting channel estimation error $\widetilde{\vect H}$ is
such that
\begin{equation}
\label{perfcrit}
J(\widetilde{\vect H},\vect{H}) \leq \frac{1}{\gamma}
\end{equation}
for some parameter $\gamma>0$, which we call \emph{accuracy}. 
In our settings, (\ref{perfcrit}) can not be typically ensured, since the
channel estimation error is Gaussian distributed (see
\eqref{MVUHdist} and \eqref{eq_channel_information}) and,
therefore, can be arbitrarily large. However, for the MVU
estimator \eqref{MVU}, we know that, with probability $\alpha$, $\widetilde{\vect H}$ will
belong to the set ${\cal D}_D$ defined
in \eqref{DD}. Thus, we are led to training signal designs which
guarantee \eqref{perfcrit} for all channel estimation errors
$\widetilde{\vect H} \in{\cal D}_D$. One training design
problem that is based on this concept is to minimize the
required transmit energy budget subject to this constraint

\begin{align}
\label{DGPP}
\begin{array}{cl}
\mathrm{DGPP}: & \minimize{\vect{P} \in \mathbb{C}^{n_T \times B}}\ \  \tr ( \vect{P}\vect{P}^H ) \\ %
   &\text{s.t.}\ \                                J(\widetilde{\vect
H},\vect{H}) \leq \frac{1}{\gamma}\quad \forall\, \widetilde{\vect
H}\in {\cal D}_D.
\end{array}
\end{align}

Similarly, for the MMSE estimator in Subsection \ref{sec:bayes}, the
corresponding optimization problem is given as follows
\begin{align}
\label{SGPP}
\begin{array}{cl}
\mathrm{SGPP}: & \minimize{\vect{P} \in \mathbb{C}^{n_T \times B}}\ \  \tr ( \vect{P}\vect{P}^H) \\ %
& \text{s.t.}\ \                                  J(\widetilde{\vect
H},\vect{H}) \leq \frac{1}{\gamma}\quad \forall\, \widetilde{\vect
H}\in {\cal D}_B,
\end{array}
\end{align}
where ${\cal D}_B$ is defined in \eqref{DB}. We will call
\eqref{DGPP} and \eqref{SGPP}, the deterministic guaranteed
performance problem (DGPP) and the stochastic guaranteed
performance problem (SGPP), respectively. An alternative,
``dual'', problem is to maximize the accuracy $\gamma$ subject to
a constraint ${\cal P}>0$ on the transmit energy budget. For the
MVU estimator this can be written as
\begin{align}
\label{DMPP}
\begin{array}{cl}
\mathrm{DMPP}: & \maximize{\vect{P} \in \mathbb{C}^{n_T \times B}}\ \ \gamma \\ %
&\text{s.t.}\ \
J(\widetilde{\vect{H}},\vect{H}) \leq \frac{1}{\gamma} \quad
\forall\,
\widetilde{\vect{H}}\in {\cal D}_D,\\
&\ \  \ \ \ \ \tr ( \vect{P}\vect{P}^H )\leq \cal P.
\end{array}
\end{align}
We will call this problem the deterministic maximized performance
  problem (DMPP). The corresponding Bayesian problem will be
  denoted as
the stochastic maximized performance
  problem (SMPP). We will study the DGPP/SGPP in detail in this
  contribution, but the DMPP/SMPP can be
treated in similar ways. In fact, Theorem 3 in
  \cite{Rojas&Aguero&al:08aut} suggests that the solutions to
  the DMPP/SMPP are the same as for DGPP/SGPP, save for a scaling factor.

The existing work on optimal training design for MIMO channels
are, to the best of the authors knowledge, based upon standard measures on
the quality of the channel estimate, rather than on the quality of
the end-use of the channel. The framework presented in
this section can be used to treat the existing results as special
cases. Additionally, if an end performance metric is optimized,
the DGPP/SGPP and DMPP/SMPP formulations better reflect
the ultimate objective of the training design. This type of
optimal training design formulations has already been used in the
control literature, but mainly for large sample sizes
\cite{Jansson:04a,Geversetal:06a,Barenthin:06,Bombois:09a},
yielding an enhanced performance with respect to conventional
estimation-theoretic approaches. A reasonable question is to
examine if such a performance gain can be achieved in the case of
training sequence design for MIMO channel estimation, where the
sample sizes would be very small.

\emph{Remark:} Ensuring (\ref{perfcrit}) can be translated into a chance constraint of the form
\begin{equation}
\label{eq:chanceCon}
{\rm Pr}\left\{J(\widetilde{\vect H},\vect{H}) \leq \frac{1}{\gamma}\right\}\geq 1-\varepsilon
\end{equation}
for some $\varepsilon \in [0,1]$. Problems (\ref{DGPP}), (\ref{SGPP}) and (\ref{DMPP}) correspond to a
convex relaxation of this chance constraint based on confidence ellipsoids \cite{Rojas2011}, as we show in the next subsection.

\subsection{Approximating the Training Design Problems}
\label{subsec:ConfEllips-1}

A key issue regarding the above training signal design problems is
their computational tractability. In general, they are highly
non-linear and non-convex. However, for performance metrics that are sufficiently
smooth functions of the estimation error and have a minimum when the estimation error
is zero, Taylor's theorem shows that they can be well approximated by a constant plus
a quadratic term in $\widetilde{\vect
H}$. Therefore, we consider performance metrics that can
be approximated by
\begin{align}
J(\widetilde{\vect{H}},\vect{H})\approx\vecop^H(\widetilde{\vect
H})\Ical_{\text{adm}}\vecop(\widetilde{\vect
H}).\label{J(Hhat,H)}
\end{align}
For mathematical tractability, we will further assume that the Hermitian positive definite matrix $\Ical_{\text{adm}}$
can be written in Kronecker product form as $\Ical_T^T \otimes
\Ical_R$ for some matrices $\Ical_T$ and $\Ical_R$. In Section \ref{sec:applic}, we will show several examples of practically relevant performance metrics that can be approximated in this form. This means
that we can approximate the set $ \{\widetilde{\vect H} \ :\
J(\widetilde{\vect H},\vect{H})\leq 1/\gamma \}$ of all admissible
estimation errors $\widetilde{\vect H}$ by a (complex) ellipsoid
in the parameter space
\begin{align}
\label{Dadm} {\cal D}_{adm} = \{\widetilde{\vect H} \ :\
\vecop^H(\widetilde{\vect H})\gamma\Ical_{\text{adm}}
\vecop(\widetilde{\vect H}) \leq 1 \}.
\end{align}
Consequently, the DGPP \eqref{DGPP} can be approximated by
\begin{align}
\label{ADGPP}
\begin{array}{cl}
\mathrm{ADGPP}: & \minimize{\vect{P} \in \mathbb{C}^{n_T \times B}}\ \ \tr ( \vect{P}\vect{P}^H ) \\ %
&\text{s.t.}\ \                                   {\cal D}_D \subseteq
{\cal D}_{adm} .
\end{array}
\end{align}
We call this problem the approximative DGPP (ADGPP). Both
${\cal D}_D$ and ${\cal D}_{adm}$ are level sets of quadratic
functions of the channel estimation error. Rewriting \eqref{DD} so that we have the same level as in \eqref{Dadm}, we obtain
$$
{\cal D}_D= \left\{\widetilde{\vect H} \ :\ \vecop^H(\widetilde{\vect
H})\frac{2\Ical_{\text{F,MVU}}}{\chi^2_\alpha(2n_Tn_R)}
\vecop(\widetilde{\vect H}) \leq 1 \right\}.
$$
Comparing this expression with \eqref{Dadm} gives that ${\cal
  D}_D\subseteq {\cal
  D}_{adm}$ if and only if
$$
\frac{2\Ical_{\text{F,MVU}}}{\chi^2_\alpha(2n_Tn_R)} \succeq
\gamma \Ical_{\text{adm}}
$$
(for a more general result see \cite[Theorem
3.1]{Hjalmarsson:09}).

When $\Ical_{\text{adm}}$ has the form
$\Ical_{\text{adm}}=\Ical_T^T \otimes \Ical_R$, with $\Ical_T\in
\mathbb{C}^{n_T\times n_T}$ and $\Ical_R\in \mathbb{C}^{n_R\times
n_R}$, the ADGPP \eqref{ADGPP} can then be written as
\begin{align}
\label{ADGPP2}
\begin{array}{cl}
\minimize{\vect{P} \in \mathbb{C}^{n_T \times B}} & \tr ( \vect{P}\vect{P}^H ) \\ %
\text{s.t.}                                 &
\underbrace{\vect{\Ptilde}^H \vect{S}^{-1}
\vect{\Ptilde}}_{\Ical_{\text{F,MVU}}}\succeq \frac{\gamma
\chi^2_\alpha(2n_Tn_R)}{2}\, \Ical_T^T \otimes \Ical_R.
\end{array}
\end{align}
Similarly, by observing that ${\cal D}_{adm}$ only depends on the
channel estimation error, and following the derivations above, the
SGPP can be approximated by the following formulation
\begin{align}
\label{ASGPP2}
\begin{array}{cl}
\minimize{\vect{P} \in \mathbb{C}^{n_T \times B}} & \tr ( \vect{P}\vect{P}^H ) \\ %
\text{s.t.}                                 &
\underbrace{\vect{R}^{-1}+\vect{\Ptilde}^H \vect{S}^{-1}
\vect{\Ptilde}}_{\Ical_{\text{F,MMSE}}}\succeq \frac{\gamma
\chi^2_\alpha(2n_Tn_R)}{2}\, \Ical_T^T \otimes \Ical_R.
\end{array}
\end{align}
We call the last problem approximative SGPP (ASGPP).
\emph{Remarks:}
\begin{enumerate}

    \item  Several examples of the approximation (\ref{J(Hhat,H)}) are presented in Section~\ref{sec:applic}. The approximation (\ref{J(Hhat,H)}) is
    not possible for the performance metric of every application. Therefore,
    in some applications, alternative convex approximations of
    the corresponding performance metrics may have to be found.

   \item The quality of the approximation (\ref{J(Hhat,H)}) is
   characterized by its corresponding tightness to the true performance metric.
   For our purposes, when the tightness of the
   aforementioned approximation is acceptable, such an
   approximation will be desirable because it
   corresponds to a Hermitian form, therefore offering nice
   mathematical properties and tractability.

\item The sizes of ${\cal D}_D$ and ${\cal D}_{adm}$ critically depend on the parameter $\alpha$.
In practice, requiring $\alpha$ to have a value close to $1$ corresponds to adequately representing
the uncertainty set in which (approximately) all possible channel estimation errors lie.
\end{enumerate}

\subsection{The Deterministic Guaranteed Performance Problem}
\label{subsec:ConfEllips-2}

The problem formulations for ADGPP and ASGPP in \eqref{ADGPP2} and
\eqref{ASGPP2}, respectively, are similar in structure. The
solutions to these problems (and to other approximative guaranteed
performance problems) can be obtained from the following general
theorem.

\begin{thm} \label{teo:1}
Consider the optimization problem
\begin{align} \label{eq:teo1_1}
\begin{array}{cl}
\minimize{\vect{P} \in \mathbb{C}^{n \times N}} & \tr ( \vect{P}\vect{P}^H ) \\ %
\text{s.t.}                                 & \vect{P} \vect{A}^{-1} \vect{P}^H  \succeq \vect{B} %
\end{array}
\end{align}
where $\vect{A} \in \mathbb{C}^{N \times N}$ is Hermitian positive
definite, $\vect{B} \in \mathbb{C}^{n \times n}$ is Hermitian
positive semi-definite, and $N \geq \rank(\vect{B})$. An optimal
solution to \eqref{eq:teo1_1} is
{\setlength{\arraycolsep}{0.6mm}
\begin{align} \label{eq:teo1_2}
\vect{P}^{\textrm{opt}} = \vect{U}_B \vect{D}_P
\vect{U}_A^H
\end{align}}
where $\vect{D}_P \in \mathbb{C}^{n \times N}$ is a rectangular
diagonal matrix with $\sqrt{(\vect{D}_A)_{1, 1} (\vect{D}_B)_{1,
1}},\ldots, \sqrt{(\vect{D}_A)_{m, m} (\vect{D}_B)_{m, m}}$ on the
main diagonal. Here, $m=\min(n,N)$, while $\vect{U}_A$ and
$\vect{U}_B$ are unitary matrices that originate from the eigendecompositions of $\vect{A}$ and $\vect{B}$, respectively, i.e.,
\begin{equation} \label{eq:teo1_3}
\begin{split}
\vect{A} &= \vect{U}_A \vect{D}_A \vect{U}_A^H \\
\vect{B} &= \vect{U}_B \vect{D}_B \vect{U}_B^H
\end{split}
\end{equation}
and $\vect{D}_A, \vect{D}_B$ are real-valued diagonal
matrices, with their diagonal elements sorted in ascending and
descending order, respectively; that is, $0 < (\vect{D}_A)_{1,1}
\leq \ldots \leq (\vect{D}_A)_{N,N}$ and $(\vect{D}_B)_{1,1} \geq
\ldots \geq (\vect{D}_B)_{n,n} \geq 0$.

If the eigenvalues of $\vect{A}$ and $\vect{B}$ are distinct and
strictly positive, then the solution~\eqref{eq:teo1_2} is unique
up to the multiplication of the columns of $\vect{U}_A$ and
$\vect{U}_B$ by complex unit-norm scalars.
\end{thm}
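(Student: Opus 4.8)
The plan is to certify global optimality of the candidate \eqref{eq:teo1_2} by sandwiching the objective between a lower bound valid at every feasible point and the value attained at $\vect{P}^{\textrm{opt}}$. The hard part is that the feasible set $\{\vect{P} : \vect{P}\vect{A}^{-1}\vect{P}^H \succeq \vect{B}\}$ is a Loewner \emph{super}level set and hence non-convex, so a KKT/convexity argument cannot be invoked to certify a global optimum directly; moreover, relaxing the matrix constraint to its diagonal entries (a per-row energy bound) is too loose, since it discards the orthogonality coupling between the rows of $\vect{P}$. I would therefore reduce the matrix problem to a scalar eigenvalue problem and close the gap with two classical spectral inequalities.

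First I would whiten the constraint. Since $\vect{A}\succ 0$, set $\vect{G} := \vect{P}\vect{A}^{-1/2}$, so that $\vect{P} = \vect{G}\vect{A}^{1/2}$. A direct computation gives $\vect{P}\vect{A}^{-1}\vect{P}^H = \vect{G}\vect{G}^H$ and $\tr(\vect{P}\vect{P}^H) = \tr(\vect{G}\vect{A}\vect{G}^H) = \tr(\vect{A}\,\vect{G}^H\vect{G})$, so \eqref{eq:teo1_1} is equivalent to minimizing $\tr(\vect{A}\,\vect{G}^H\vect{G})$ subject to $\vect{G}\vect{G}^H \succeq \vect{B}$ over $\vect{G}\in\mathbb{C}^{n\times N}$.

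Next I would bound this from below for an arbitrary feasible $\vect{G}$. Applying von Neumann's trace inequality to the Hermitian pair $\vect{A}$ and $\vect{G}^H\vect{G}$ yields $\tr(\vect{A}\,\vect{G}^H\vect{G}) \geq \sum_{k=1}^{N}\lambda_k^{\uparrow}(\vect{A})\,\lambda_k^{\downarrow}(\vect{G}^H\vect{G})$, where $\lambda_k^{\uparrow}$ and $\lambda_k^{\downarrow}$ denote eigenvalues sorted in ascending and descending order; this inequality is exactly what accounts for the inter-row coupling that the diagonal relaxation would lose. Since $\vect{G}^H\vect{G}$ and $\vect{G}\vect{G}^H$ share the same nonzero spectrum and $\rank(\vect{G}\vect{G}^H)\leq m=\min(n,N)$, only the first $m$ terms survive, with $\lambda_k^{\uparrow}(\vect{A}) = (\vect{D}_A)_{k,k}$. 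The feasibility constraint $\vect{G}\vect{G}^H \succeq \vect{B}$ together with Weyl's monotonicity theorem gives $\lambda_k^{\downarrow}(\vect{G}\vect{G}^H)\geq\lambda_k^{\downarrow}(\vect{B}) = (\vect{D}_B)_{k,k}$ for every $k$, and since each $(\vect{D}_A)_{k,k}>0$ I obtain the uniform lower bound $\tr(\vect{P}\vect{P}^H)\geq\sum_{k=1}^{m}(\vect{D}_A)_{k,k}(\vect{D}_B)_{k,k}$ at every feasible $\vect{P}$. The hypothesis $N\geq\rank(\vect{B})$ is what guarantees feasibility and ensures the zero eigenvalues of $\vect{B}$ are correctly absorbed when $N<n$.

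Finally I would verify that the candidate attains this bound: substituting \eqref{eq:teo1_2} into the constraint and using $\vect{U}_A^H\vect{U}_A=\vect{I}$ gives $\vect{P}^{\textrm{opt}}\vect{A}^{-1}(\vect{P}^{\textrm{opt}})^H = \vect{U}_B\,\vect{D}_P\vect{D}_A^{-1}\vect{D}_P^H\,\vect{U}_B^H = \vect{U}_B\vect{D}_B\vect{U}_B^H = \vect{B}$, so the constraint holds with equality, and $\tr(\vect{P}^{\textrm{opt}}(\vect{P}^{\textrm{opt}})^H) = \tr(\vect{D}_P\vect{D}_P^H) = \sum_{k=1}^{m}(\vect{D}_A)_{k,k}(\vect{D}_B)_{k,k}$, matching the lower bound; hence \eqref{eq:teo1_2} is optimal. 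Pairing the \emph{ascending} eigenvalues of $\vect{A}$ with the \emph{descending} eigenvalues of $\vect{B}$ is precisely the configuration that makes von Neumann's inequality tight, which is where the prescribed orderings of $\vect{D}_A$ and $\vect{D}_B$ enter. For the uniqueness claim I would observe that, when all eigenvalues are distinct and positive, equality in the Weyl chain forces $\vect{G}\vect{G}^H=\vect{B}$ (a positive semidefinite matrix with zero trace is zero), while equality in von Neumann's inequality pins down the eigenvectors of $\vect{G}^H\vect{G}$ and $\vect{A}$ up to multiplication by unit-norm scalars, which yields exactly the stated degrees of freedom.
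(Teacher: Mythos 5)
Your proof is correct, and it rests on the same two ingredients as the paper's --- the trace rearrangement inequality $\tr(\vect{X}\vect{Y})\geq\sum_k\lambda_k^{\uparrow}(\vect{X})\lambda_k^{\downarrow}(\vect{Y})$ (the first assertion of Lemma~\ref{lem:3}, i.e.\ von Neumann's inequality for Hermitian matrices) and Weyl's eigenvalue monotonicity --- but it is organized as a bound-and-attain certificate rather than a constructive reduction. The paper whitens via $\bar{\vect{P}}=\vect{U}_B^H\vect{P}\vect{U}_A\vect{D}_A^{-1/2}$, takes the SVD $\bar{\vect{P}}=\vect{U}\vect{\Sigma}\vect{V}^H$, and argues in stages that $\vect{V}$ and then $\vect{U}$ may be replaced by identity matrices before solving a scalar program in the $\sigma_i$; you whiten with $\vect{G}=\vect{P}\vect{A}^{-1/2}$ and derive the single lower bound $\sum_{k=1}^{m}(\vect{D}_A)_{k,k}(\vect{D}_B)_{k,k}$ valid at every feasible point, then verify that the candidate meets it with the constraint active. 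Your route is shorter and actually firms up the weakest step of the paper's argument: the claim $\sigma_i^2\geq(\vect{D}_B)_{i,i}$ is justified there ``by looking at the diagonal elements'' of $\vect{U}\vect{\Sigma}\vect{\Sigma}^H\vect{U}^H\succeq\vect{D}_B$, which is immediate only for $i=1$; the clean justification for all $i$ is exactly the Weyl monotonicity you invoke. What the paper's staged reduction buys in return is an explicit parametrization of all optimizers, which Lemmas~\ref{lem:1} and~\ref{lem:2} then exploit for the uniqueness statement; your uniqueness sketch (trace equality plus positive semidefiniteness forces $\vect{G}\vect{G}^H=\vect{B}$, and the equality case of the trace inequality under distinct eigenvalues forces simultaneous diagonalization with the prescribed pairing) is sound in outline, but to be complete it would need the equality conditions of von Neumann's inequality spelled out at the level of detail the paper devotes to those two lemmas.
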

\begin{proof}
The proof is given in Appendix~\ref{app:teo_1}.
\end{proof}

By the right choice of $\vect{A}$ and $\vect{B}$, Theorem
\ref{teo:1} will solve the ADGPP in \eqref{ADGPP2}. This is shown
by the next theorem (recall that we have assumed that
$\vect{S}=\vect{S}_{Q}^T \otimes \vect{S}_{R}$).
\begin{thm}
\label{teo:2} Consider the optimization problem
\begin{align} \label{eq:teo2_1}
\begin{array}{cl}
\minimize{\vect{P} \in \mathbb{C}^{n_T \times B}} & \tr ( \vect{P}\vect{P}^H ) \\ %
\text{s.t.}                                 &  \vect{\Ptilde}^H (\vect{S}_{Q}^T \otimes \vect{S}_{R})^{-1} \vect{\Ptilde}  \succeq c \Ical_T^T \otimes \Ical_R %
\end{array}
\end{align}
where $\vect{\Ptilde} = \vect{P}^T \otimes \vect{I}$,
$\vect{S}_{Q} \in \mathbb{C}^{B \times B}$, $\vect{S}_{R} \in
\mathbb{C}^{n_R \times n_R}$ are Hermitian positive definite, and
$\Ical_T  \in \mathbb{C}^{n_T \times n_T}$, $\Ical_R  \in
\mathbb{C}^{n_R \times n_R}$ are Hermitian positive semi-definite,
and $c$ is a positive constant.

If $B \geq \rank(\Ical_T)$, this problem is equivalent to
(\ref{eq:teo1_1}) in Theorem \ref{teo:1} for
$\vect{A}=\vect{S}_{Q}$ and $\vect{B}= c
\lambda_{\textrm{max}}(\vect{S}_{R} \Ical_R) \Ical_T$, where
$\lambda_{\textrm{max}}(\cdot)$ denotes the maximum eigenvalue.
\end{thm}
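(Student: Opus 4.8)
The plan is to transform the Kronecker-structured semidefinite constraint in \eqref{eq:teo2_1} into the generic form $\vect{P}\vect{A}^{-1}\vect{P}^H\succeq\vect{B}$ of Theorem~\ref{teo:1}, while leaving the objective $\tr(\vect{P}\vect{P}^H)$ intact. The first step is Kronecker algebra. Since $\vect{\Ptilde}=\vect{P}^T\otimes\vect{I}_{n_R}$ we have $\vect{\Ptilde}^H=\vect{P}^*\otimes\vect{I}_{n_R}$, and together with $(\vect{S}_Q^T\otimes\vect{S}_R)^{-1}=\vect{S}_Q^{-T}\otimes\vect{S}_R^{-1}$ and the mixed-product rule $(\vect{X}\otimes\vect{Y})(\vect{Z}\otimes\vect{W})=\vect{X}\vect{Z}\otimes\vect{Y}\vect{W}$ this yields $\vect{\Ptilde}^H(\vect{S}_Q^T\otimes\vect{S}_R)^{-1}\vect{\Ptilde}=\vect{M}\otimes\vect{S}_R^{-1}$, where $\vect{M}\triangleq\vect{P}^*\vect{S}_Q^{-T}\vect{P}^T$ is Hermitian positive semidefinite. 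Hence the constraint becomes $\vect{M}\otimes\vect{S}_R^{-1}\succeq c\,\Ical_T^T\otimes\Ical_R$.

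The crux of the argument --- and the step I expect to be the main obstacle --- is the equivalence $\vect{M}\otimes\vect{S}_R^{-1}\succeq c\,\Ical_T^T\otimes\Ical_R \iff \vect{M}\succeq c\,\lambda_{\textrm{max}}(\vect{S}_R\Ical_R)\,\Ical_T^T$, valid for any Hermitian $\vect{M}\succeq\vect{0}$, which I would isolate as a small lemma. Applying the congruence by the invertible matrix $\vect{I}_{n_T}\otimes\vect{S}_R^{1/2}$ to both sides and using $\vect{S}_R^{1/2}\vect{S}_R^{-1}\vect{S}_R^{1/2}=\vect{I}_{n_R}$, the constraint is equivalent to $\vect{M}\otimes\vect{I}_{n_R}\succeq c\,\Ical_T^T\otimes\vect{K}$ with $\vect{K}\triangleq\vect{S}_R^{1/2}\Ical_R\vect{S}_R^{1/2}\succeq\vect{0}$; since $\vect{K}$ is similar to $\vect{S}_R\Ical_R$, its largest eigenvalue is $\mu\triangleq\lambda_{\textrm{max}}(\vect{S}_R\Ical_R)$. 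For the ``if'' direction, $\vect{M}\succeq c\mu\,\Ical_T^T$ gives $\vect{M}\otimes\vect{I}_{n_R}\succeq c\,\Ical_T^T\otimes(\mu\vect{I}_{n_R})\succeq c\,\Ical_T^T\otimes\vect{K}$, because $\mu\vect{I}_{n_R}-\vect{K}\succeq\vect{0}$ and a Kronecker product of two Hermitian PSD matrices is PSD. For the ``only if'' direction, I would evaluate the quadratic form of $\vect{M}\otimes\vect{I}_{n_R}-c\,\Ical_T^T\otimes\vect{K}$ on test vectors $\vect{u}\otimes\vect{v}$, with $\vect{v}$ a unit eigenvector of $\vect{K}$ associated with $\mu$ and $\vect{u}\in\mathbb{C}^{n_T}$ arbitrary; this yields $\vect{u}^H\vect{M}\vect{u}\geq c\mu\,\vect{u}^H\Ical_T^T\vect{u}$ for all $\vect{u}$, i.e.\ $\vect{M}\succeq c\mu\,\Ical_T^T$. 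The degenerate case $\mu=0$ (i.e.\ $\Ical_R=\vect{0}$) is trivial, since both sides then reduce to $\vect{M}\succeq\vect{0}$.

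It remains to undo the $\vect{P}^*$ parametrization. Taking entrywise complex conjugates of $\vect{M}\succeq c\mu\,\Ical_T^T$ and using that $\vect{S}_Q$ and $\Ical_T$ are Hermitian yields $\vect{P}\vect{S}_Q^{-1}\vect{P}^H\succeq c\mu\,\Ical_T$, since complex conjugation preserves the semidefinite order on Hermitian matrices ($\overline{\vect{X}}=\vect{X}^T$ has the same spectrum as $\vect{X}$). As $\tr(\vect{P}\vect{P}^H)$ is real and equals $\tr(\vect{P}^*\vect{P}^T)$, the feasible set and objective of \eqref{eq:teo2_1} coincide with those of \eqref{eq:teo1_1} for $\vect{A}=\vect{S}_Q$, $\vect{B}=c\,\lambda_{\textrm{max}}(\vect{S}_R\Ical_R)\,\Ical_T$, $n=n_T$ and $N=B$; finally, the hypothesis $B\geq\rank(\Ical_T)$ is precisely the condition $N\geq\rank(\vect{B})$ required in Theorem~\ref{teo:1}, which completes the reduction.
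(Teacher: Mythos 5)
Your proof is correct and follows essentially the same route as the paper: the mixed-product rule reduces the constraint to a Kronecker inequality in the transmit-side factor, and the key equivalence $f(\vect{P})\otimes\vect{I}\succeq\vect{B}\otimes\vect{C}\Leftrightarrow f(\vect{P})\succeq\lambda_{\textrm{max}}(\vect{C})\vect{B}$ is established by exactly the same two-sided argument (PSD of Kronecker products one way, test vectors $\vect{u}\otimes\vect{v}$ with $\vect{v}$ a top eigenvector the other way) that the paper isolates as Lemma~\ref{lem:4}. If anything, your version is slightly more careful than the paper's, since you work with the Hermitian matrix $\vect{S}_R^{1/2}\Ical_R\vect{S}_R^{1/2}$ (similar to $\vect{S}_R\Ical_R$) rather than with the generally non-Hermitian product $\vect{S}_R\Ical_R$ itself, and you track the transposes and conjugates explicitly.
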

\begin{proof}
The proof is given in Appendix~\ref{app:cor_2}.
\end{proof}

\subsection{The Stochastic Guaranteed Performance Problem}
\label{subsec:ConfEllips-3}

Next, we will see that Theorem \ref{teo:1} can be also used to
solve the ASGPP in \eqref{ASGPP2}. In order to obtain closed-form
solutions, we need some equality relation between the Kronecker
blocks of $\vect{R}= \vect{R}_{T}^T \otimes \vect{R}_{R}$ and of
either $\vect{S}=\vect{S}_{Q}^T \otimes \vect{S}_{R}$ or
$\Ical_{\text{adm}}=\Ical_T^T \otimes \Ical_R$. For instance, it
can be $\vect{R}_{R} = \vect{S}_{R}$, which may be satisfied if
the receive antennas are spatially uncorrelated or if the signal
and interference are received from the same main direction. See
\cite{Liu:07} for details on the interpretations of these
assumptions.

The solution to ASGPP in \eqref{ASGPP2} is given by the next
theorem.
\begin{thm} \label{teo:3}
Consider the optimization problem
\begin{align} \label{eq:teo3_1}
\begin{array}{cl}
\minimize{\vect{P} \in \mathbb{C}^{n_T \times B}} & \tr ( \vect{P}\vect{P}^H ) \\ %
\text{s.t.}                                 &  \vect{R}^{-1} +\vect{\Ptilde}^H \vect{S}^{-1} \vect{\Ptilde}  \succeq c \Ical_T^T \otimes \Ical_R %
\end{array}
\end{align}
where $\vect{\Ptilde} = \vect{P}^T \otimes \vect{I}$,
$\vect{R}=\vect{R}_{T}^T \otimes \vect{R}_{R}$, and
$\vect{S}=\vect{S}_{Q}^T \otimes \vect{S}_{R}$. Here,
$\vect{R}_T \in \mathbb{C}^{n_T \times n_T}$, $\vect{R}_R  \in
\mathbb{C}^{n_R \times n_R}$, $\vect{S}_{Q} \in \mathbb{C}^{B
\times B}$, $\vect{S}_{R} \in \mathbb{C}^{n_R \times n_R}$ are
Hermitian positive definite, and $\Ical_T  \in \mathbb{C}^{n_T
\times n_T}$, $\Ical_R  \in \mathbb{C}^{n_R \times n_R}$ are
Hermitian positive semi-definite, and $c$ is a positive constant.

\begin{itemize}

\item If $\vect{R}_R=\vect{S}_R$ and $B \geq \rank([c
\lambda_{\textrm{max}}(\vect{S}_{R} \Ical_R)
\Ical_T-\vect{R}_T^{-1}]_+)$, then the problem is equivalent to
(\ref{eq:teo1_1}) in Theorem \ref{teo:1} for
$\vect{A}=\vect{S}_{Q}$ and $\vect{B}= [c
\lambda_{\textrm{max}}(\vect{S}_{R} \Ical_R)
\Ical_T-\vect{R}_T^{-1}]_+$.

\item If $\vect{R}_R^{-1}=\Ical_R$ and $B \geq \rank([ c
\Ical_T-\vect{R}_T^{-1}]_+)$, then the problem is equivalent to
(\ref{eq:teo1_1}) in Theorem \ref{teo:1} for
$\vect{A}=\vect{S}_{Q}$ and $\vect{B}=
\lambda_{\textrm{max}}(\vect{S}_{R} \Ical_R) [c
\Ical_T-\vect{R}_T^{-1}]_+$.

\item If $\vect{R}_T^{-1}=\Ical_T$ and $B \geq \rank(\Ical_T)$,
then the problem is equivalent to (\ref{eq:teo1_1}) in Theorem
\ref{teo:1} for $\vect{A}=\vect{S}_{Q}$ and
$\vect{B}=\lambda_{\textrm{max}}(\vect{S}_{R}
[c\Ical_R-\vect{R}_R]_+) \Ical_T$.
\end{itemize}
\end{thm}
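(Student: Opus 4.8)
The plan is to reduce each of the three cases to the canonical form of Theorem~\ref{teo:1}, namely $\minimize{\vect{P}}\tr(\vect{P}\vect{P}^H)$ subject to $\vect{P}\vect{A}^{-1}\vect{P}^H\succeq\vect{B}$ with $\vect{A}=\vect{S}_Q\succ\vect{0}$ and a Hermitian \emph{positive semidefinite} $\vect{B}$. First I would carry out the Kronecker bookkeeping common to all cases: using $(\vect{A}\otimes\vect{B})(\vect{C}\otimes\vect{D})=\vect{A}\vect{C}\otimes\vect{B}\vect{D}$ and the Hermitian symmetry of $\vect{X}:=\vect{P}\vect{S}_Q^{-1}\vect{P}^H$, one finds $\vect{\Ptilde}^H\vect{S}^{-1}\vect{\Ptilde}=\vect{X}^T\otimes\vect{S}_R^{-1}$ and $\vect{R}^{-1}=(\vect{R}_T^{-1})^T\otimes\vect{R}_R^{-1}$, so the constraint in \eqref{eq:teo3_1} reads $(\vect{R}_T^{-1})^T\otimes\vect{R}_R^{-1}+\vect{X}^T\otimes\vect{S}_R^{-1}\succeq c\,\Ical_T^T\otimes\Ical_R$.

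The analytical engine is a single lemma that collapses the receive ($n_R\times n_R$) Kronecker factor: for Hermitian $\vect{G}$ and Hermitian positive semidefinite $\vect{K}$, one has $\vect{G}\otimes\vect{S}_R^{-1}\succeq\vect{K}^T\otimes\Ical_R$ iff $\vect{G}\succeq\lambda_{\textrm{max}}(\vect{S}_R\Ical_R)\,\vect{K}^T$. The forward direction follows by testing on product vectors $\vect{u}\otimes\vect{v}$ with $\vect{v}$ the top generalized eigenvector of $(\Ical_R,\vect{S}_R^{-1})$, which replaces $\vect{v}^H\Ical_R\vect{v}$ by $\lambda_{\textrm{max}}(\vect{S}_R\Ical_R)\,\vect{v}^H\vect{S}_R^{-1}\vect{v}$; the reverse uses $\Ical_R\preceq\lambda_{\textrm{max}}(\vect{S}_R\Ical_R)\vect{S}_R^{-1}$ together with $\vect{K}^T\succeq\vect{0}$ to keep the Kronecker product positive. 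I would apply this per case: for $\vect{R}_R=\vect{S}_R$ the two terms share the factor $\vect{S}_R^{-1}$ and combine into $[(\vect{R}_T^{-1})^T+\vect{X}^T]\otimes\vect{S}_R^{-1}$, leaving the residual $\vect{X}\succeq c\lambda_{\textrm{max}}(\vect{S}_R\Ical_R)\Ical_T-\vect{R}_T^{-1}$; for $\vect{R}_R^{-1}=\Ical_R$ I would move $\vect{R}^{-1}$ to the right, collecting $(c\Ical_T-\vect{R}_T^{-1})^T\otimes\Ical_R$ before applying the lemma; the case $\vect{R}_T^{-1}=\Ical_T$ detaches the common transmit factor $\Ical_T^T$ and leaves a residual receive block that must itself be collapsed through $\lambda_{\textrm{max}}(\vect{S}_R\,\cdot\,)$.

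The crux is that in every case the residual right-hand side is only \emph{Hermitian}, whereas Theorem~\ref{teo:1} requires $\vect{B}\succeq\vect{0}$; this is exactly where $[\,\cdot\,]_+$ enters. I would show that replacing an indefinite residual $\vect{M}$ by $[\vect{M}]_+$ leaves both the optimal value and the optimizers unchanged, even though it shrinks the feasible set. The inclusion $\{\vect{X}\succeq[\vect{M}]_+\}\subseteq\{\vect{X}\succeq\vect{M}\}$ gives one value inequality for free. For the converse, given feasible $\vect{P}$ with $\vect{X}=\vect{P}\vect{S}_Q^{-1}\vect{P}^H\succeq\vect{M}$, let $\vect{\Pi}$ be the orthogonal projector onto the range of $[\vect{M}]_+$ and set $\vect{P}'=\vect{\Pi}\vect{P}$: then $\tr(\vect{P}'\vect{P}'^H)\leq\tr(\vect{P}\vect{P}^H)$ since $\vect{\Pi}\preceq\vect{I}$, while $\vect{P}'\vect{S}_Q^{-1}\vect{P}'^H=\vect{\Pi}\vect{X}\vect{\Pi}\succeq\vect{\Pi}\vect{M}\vect{\Pi}=[\vect{M}]_+$, so $\vect{P}'$ is feasible for the projected problem at no larger cost. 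Invoking Theorem~\ref{teo:1} with $\vect{A}=\vect{S}_Q$ and $\vect{B}=[\vect{M}]_+$, whose rank is controlled by the stated $B\geq\rank(\vect{B})$ hypotheses, then delivers the optimizer.

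The step I expect to be the main obstacle is keeping the two simplifications --- collapsing the receive block via $\lambda_{\textrm{max}}$ and taking the positive part --- logically compatible. In the case $\vect{R}_R=\vect{S}_R$ both are benign: the receive reduction is a genuine equivalence because the detached factor $c\Ical_T$ is positive semidefinite, and $[\,\cdot\,]_+$ only restores applicability of Theorem~\ref{teo:1}. In the remaining cases the detached factor is indefinite, so the lemma survives in the forward direction only, and the equivalence must be argued at the level of optimal values by combining the projection argument above with the splitting $\vect{M}=[\vect{M}]_+-[\vect{M}]_-$; the case $\vect{R}_T^{-1}=\Ical_T$ is the most delicate, since there the indefiniteness sits inside the very receive block that the $\lambda_{\textrm{max}}$ reduction is meant to eliminate, and I would treat its positive part with particular care.
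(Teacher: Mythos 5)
Your route coincides with the paper's: the same Kronecker bookkeeping, the same collapsing lemma (this is precisely the paper's Lemma~\ref{lem:4}, proved there exactly as you propose --- product vectors with the top generalized eigenvector of $(\Ical_R,\vect{S}_R^{-1})$ for necessity, $\Ical_R\preceq\lambda_{\textrm{max}}(\vect{S}_R\Ical_R)\vect{S}_R^{-1}$ for sufficiency), followed by a positive-part replacement so that Theorem~\ref{teo:1} applies. Where you differ is the $[\,\cdot\,]_+$ step, and there your version is the more rigorous one: the paper asserts the set identity $\{\vect{X}\succeq\vect{M}\}=\{\vect{X}\succeq[\vect{M}]_+\}$ for positive semidefinite $\vect{X}$, which is false as a set identity (e.g.\ $\vect{X}=\diag(4,1)$ and $\vect{M}$ with zeros on the diagonal and $2$ off-diagonal satisfy $\vect{X}\succeq\vect{M}$ but not $\vect{X}\succeq[\vect{M}]_+$). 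Your projection $\vect{P}'=\vect{\Pi}\vect{P}$ onto the range of $[\vect{M}]_+$ repairs this correctly at the level of optimal values, since $\vect{\Pi}\vect{M}\vect{\Pi}=[\vect{M}]_+$ (the positive and negative eigenspaces of a Hermitian matrix are orthogonal) and $\tr(\vect{P}^H\vect{\Pi}\vect{P})\leq\tr(\vect{P}^H\vect{P})$. The same combination also disposes of the second bullet, because the necessity direction of the collapsing lemma does not require $\vect{K}\succeq\vect{0}$ and the indefiniteness there sits in the transmit factor, where your projection acts.

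The genuine gap is the third bullet, which you flag but do not close --- and it cannot be closed by this machinery. With $\vect{L}=c\Ical_R-\vect{R}_R$ indefinite in the \emph{receive} factor, testing $\vect{X}^T\otimes\vect{S}_R^{-1}\succeq\Ical_T^T\otimes\vect{L}$ on product vectors yields only $\vect{X}\succeq\max\bigl(0,\lambda_{\textrm{max}}(\vect{S}_R\vect{L})\bigr)\Ical_T$, i.e.\ the generalized Rayleigh quotient of $\vect{L}$ against $\vect{S}_R^{-1}$, and one checks directly that this weaker condition is already \emph{sufficient} for the original constraint (since $\mu\vect{S}_R^{-1}\succeq\vect{L}$ for $\mu=\max(0,\lambda_{\textrm{max}}(\vect{S}_R\vect{L}))$ and $\Ical_T\succeq\vect{0}$). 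Because $\lambda_{\textrm{max}}(\vect{S}_R\vect{L})$ can be strictly smaller than $\lambda_{\textrm{max}}(\vect{S}_R[\vect{L}]_+)$ when $\vect{L}$ is indefinite and $\vect{S}_R$ is far from a multiple of the identity (e.g.\ $\vect{L}=\diag(1,-1)$ with a strongly correlated $\vect{S}_R$), no projection argument will recover the constant $\lambda_{\textrm{max}}(\vect{S}_R[c\Ical_R-\vect{R}_R]_+)$ appearing in the statement; the exact reduction obtainable here is $\vect{B}=\max\bigl(0,\lambda_{\textrm{max}}(\vect{S}_R(c\Ical_R-\vect{R}_R))\bigr)\Ical_T$, which is in general strictly smaller than the stated $\vect{B}$. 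So your caution about this case is well founded: the paper's own proof of this bullet (which first replaces $\vect{L}$ by $[\vect{L}]_+$ and then invokes Theorem~\ref{teo:2}) silently strengthens the constraint, and any complete proof must either justify that this strengthening is lossless (it is not, in general) or correct the constant.
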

\begin{proof}
The proof is given in Appendix~\ref{app:cor_2}.
\end{proof}

The mathematical difference between ADGPP and ASGPP is the
$\vect{R}^{-1}$ term that appears in the constraint of the latter.
This term has a clear impact on the structure of the optimal ASGPP
training matrix.

It is also worth noting that the solution for $\vect{R}_R=\vect{S}_R$ requires $B \geq
\rank([c \lambda_{\textrm{max}}(\vect{S}_{R} \Ical_R)
\Ical_T-\vect{R}_T^{-1}]_+)$ which means that solutions can be
achieved also for $B<n_T$ (i.e., when only the $B<n_T$ strongest
eigendirections of the channel are excited by training). In
certain cases, e.g., when the interference is temporally white
($\vect{S}_Q = \vect{I}$), it is optimal to have $B = \rank([c
\lambda_{\textrm{max}}(\vect{S}_{R} \Ical_R)
\Ical_T-\vect{R}_T^{-1}]_+)$ as larger $B$ will not decrease the
training energy usage, cf.~\cite{Bjornson:10}.

\subsection{Optimizing the Average Performance}
\label{subsec:DetPerDes}

Except from the previously presented training designs, the application-oriented design can
be alternatively given in the following deterministic ``dual'' context. If $\vect{H}$ is considered to be deterministic, then we can setup the following
optimization problem
\begin{align}
\label{DetDes}
\begin{array}{cl}
\minimize{\vect{P} \in \mathbb{C}^{n_T \times B}} & \E_{\widetilde{\vect{H}}} \left\{J(\widetilde{\vect{H}},\vect{H})\right\} \\ %
\text{s.t.}                                 & {\mathrm{tr}(\vect{P}\vect{P}^{H})}\leq \mathcal{P}.
\end{array}
\end{align}
Clearly, for the MVU estimator
\[
\E_{\widetilde{\vect{H}}} \left\{J(\widetilde{\vect{H}},\vect{H})\right\}=\mathrm{tr}\left\{\Ical_{\text{adm}}(\vect{\Ptilde}^H
\vect{S}^{-1} \vect{\Ptilde})^{-1}\right\},
\]
so problem (\ref{DetDes}) is solved by the following theorem.
\begin{thm} \label{teo:4}
Consider the optimization problem
\begin{align} \label{eq:teo4_1}
\begin{array}{cl}
\minimize{\vect{P} \in \mathbb{C}^{n_T \times B}} & \mathrm{tr}\left\{\Ical_{\text{adm}}(\vect{\Ptilde}^H
\vect{S}^{-1} \vect{\Ptilde})^{-1}\right\} \\ %
\text{s.t.}                                 &{\mathrm{tr}(\vect{P}\vect{P}^{H})}\leq \mathcal{P}  %
\end{array}
\end{align}
where $\Ical_{\text{adm}}=\Ical_T^T \otimes
\Ical_R$ as before. Set $\Ical_T'=\Ical_T^T=\vect{U}_{T}\vect{D}_T\vect{U}_T^{H}$ and
$\vect{S}_Q'=\vect{S}_Q^T=\vect{U}_{Q}\vect{D}_Q\vect{U}_Q^{H}$. Here, $\vect{U}_{T} \in \mathbb{C}^{n_T \times n_T}$, $\vect{U}_{Q} \in \mathbb{C}^{B \times B}$
are unitary matrices and $\vect{D}_T,\vect{D}_Q$ are diagonal $n_T\times n_T$ and $B\times B$ matrices containing the eigenvalues of $\Ical_T'$ and
$\vect{S}_Q'$ in descending and ascending order, respectively. Then, the optimal training matrix $\vect{P}$ equals $\left(\vect{U}_{T}\vect{D}_P\vect{U}_{Q}^H\right)^{*}$, where $\vect{D}_P$ is an $n_T\times B$ diagonal matrix with main diagonal entries equal to
$(\vect{D}_P)_{i,i}=\sqrt{\mathcal{P}\sqrt{\alpha_i}/\sum_{j=1}^{n_T}\sqrt{\alpha_j}}, i=1,2,\ldots,n_T$ ($B\geq n_T$) and $\alpha_i=(\vect{D}_T)_{i,i}(\vect{D}_Q)_{i,i}, i=1,2,\ldots,n_T$ with the aforementioned ordering.
\end{thm}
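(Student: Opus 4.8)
The plan is to exploit the Kronecker structure to collapse the matrix program to a scalar power-allocation problem, and then solve that by a Cauchy--Schwarz argument. First I would substitute $\vect{W}=\vect{P}^*$ and apply the mixed-product rule. Since $\vect{\Ptilde}^H=\vect{W}\otimes\vect{I}$, $\vect{S}^{-1}=(\vect{S}_Q')^{-1}\otimes\vect{S}_R^{-1}$ and $\Ical_{\text{adm}}=\Ical_T'\otimes\Ical_R$, a direct computation gives $\vect{\Ptilde}^H\vect{S}^{-1}\vect{\Ptilde}=\vect{M}\otimes\vect{S}_R^{-1}$ with $\vect{M}\triangleq\vect{W}(\vect{S}_Q')^{-1}\vect{W}^H$, so the cost factorizes as $\mathrm{tr}(\Ical_R\vect{S}_R)\,\mathrm{tr}(\Ical_T'\vect{M}^{-1})$. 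As $\mathrm{tr}(\Ical_R\vect{S}_R)>0$ is constant and $\mathrm{tr}(\vect{P}\vect{P}^H)=\mathrm{tr}(\vect{W}\vect{W}^H)$, the problem reduces to minimizing $\mathrm{tr}(\Ical_T'\vect{M}^{-1})$ subject to $\mathrm{tr}(\vect{W}\vect{W}^H)\le\mathcal{P}$; finiteness forces $\vect{M}$ nonsingular, which is possible exactly because $B\ge n_T$.

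Next I would let $s_1\ge\cdots\ge s_{n_T}>0$ be the eigenvalues of $\vect{M}=\vect{V}\vect{V}^H$, where $\vect{V}=\vect{W}(\vect{S}_Q')^{-1/2}$ so that $\mathrm{tr}(\vect{W}\vect{W}^H)=\mathrm{tr}(\vect{S}_Q'\vect{V}^H\vect{V})$, and apply von Neumann's trace inequality twice. For the objective, $\mathrm{tr}(\Ical_T'\vect{M}^{-1})\ge\sum_{i=1}^{n_T}(\vect{D}_T)_{ii}/s_i$ (eigenvalues of $\Ical_T'$ in descending order against $s_i$ in descending order), with equality when $\vect{M}$ shares the eigenbasis $\vect{U}_T$. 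For the constraint, $\mathrm{tr}(\vect{S}_Q'\vect{V}^H\vect{V})\ge\sum_{i=1}^{n_T}(\vect{D}_Q)_{ii}\,s_i$: here the $B-n_T$ zero eigenvalues of $\vect{V}^H\vect{V}$ annihilate the $B-n_T$ largest eigenvalues of $\vect{S}_Q'$, leaving the nonzero $s_i$ paired with the $n_T$ smallest eigenvalues $(\vect{D}_Q)_{ii}$ in ascending order, with equality when $\vect{V}^H\vect{V}$ shares the eigenbasis $\vect{U}_Q$. Because these two conditions fix the independent left and right singular bases of $\vect{W}$ separately, both are attainable simultaneously, precisely by $\vect{W}=\vect{U}_T\vect{D}_P\vect{U}_Q^H$.

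Combining the two bounds leaves the scalar program: minimize $\sum_i(\vect{D}_T)_{ii}/s_i$ subject to $\sum_i(\vect{D}_Q)_{ii}s_i\le\mathcal{P}$. By Cauchy--Schwarz, $\big(\sum_i(\vect{D}_T)_{ii}/s_i\big)\big(\sum_i(\vect{D}_Q)_{ii}s_i\big)\ge\big(\sum_i\sqrt{\alpha_i}\big)^2$ with $\alpha_i=(\vect{D}_T)_{ii}(\vect{D}_Q)_{ii}$, so the cost is bounded below by $(\sum_i\sqrt{\alpha_i})^2/\mathcal{P}$, with equality iff $s_i\propto\sqrt{(\vect{D}_T)_{ii}/(\vect{D}_Q)_{ii}}$ and the power budget is active. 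Since $\vect{W}=\vect{U}_T\vect{D}_P\vect{U}_Q^H$ yields $\vect{M}=\vect{U}_T\,\diag\big((\vect{D}_P)_{ii}^2/(\vect{D}_Q)_{ii}\big)\,\vect{U}_T^H$, i.e.\ $s_i=(\vect{D}_P)_{ii}^2/(\vect{D}_Q)_{ii}$, the choice $(\vect{D}_P)_{ii}^2=\mathcal{P}\sqrt{\alpha_i}/\sum_j\sqrt{\alpha_j}$ makes $s_i\propto\sqrt{(\vect{D}_T)_{ii}/(\vect{D}_Q)_{ii}}$ and attains the bound; hence $\vect{P}=(\vect{U}_T\vect{D}_P\vect{U}_Q^H)^*$ is optimal.

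The \textbf{main obstacle} I anticipate is making the second step airtight: the constraint inequality requires careful eigenvalue bookkeeping for the rectangular $\vect{V}$, since its $B-n_T$ vanishing singular values must be matched against the \emph{largest} eigenvalues of $\vect{S}_Q'$, and one must verify that the pairing forced by the two trace inequalities --- descending $(\vect{D}_T)_{ii}$ against ascending $(\vect{D}_Q)_{ii}$ --- coincides with the ordering asserted in the theorem and is jointly achievable by a single $\vect{W}$. By contrast, the Kronecker manipulations of the first step and the Cauchy--Schwarz argument of the third are routine.
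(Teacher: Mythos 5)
Your proof is correct, and it reaches the paper's answer by a route that is recognizably the same in spirit (Kronecker separation, eigenbasis alignment, scalar power allocation) but packaged quite differently in its middle step. The paper first argues via Schur-convexity of $\tr(\vect{A}^{-1})$ and the majorization of a Hermitian matrix's diagonal by its eigenvalues that one may restrict to training matrices whose singular bases diagonalize $\Ical_T'^{-1/2}\vect{P}'^H\vect{S}_Q'^{-1}\vect{P}'\Ical_T'^{-1/2}$, then solves the resulting power allocation by a Lagrangian, and only at the end invokes its Lemma~3 (the rearrangement/trace inequality) to pin down the descending-versus-ascending pairing of the eigenvalues of $\Ical_T'$ and $\vect{S}_Q'$. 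You instead apply that same trace inequality twice --- once to lower-bound the objective $\tr(\Ical_T'\vect{M}^{-1})$ and once to lower-bound the energy $\tr(\vect{S}_Q'\vect{V}^H\vect{V})$ --- which yields a scalar relaxation whose optimum you compute by Cauchy--Schwarz and then attain by an explicit $\vect{W}$. This ``bound-then-achieve'' structure buys you two things: the optimal diagonalization and the optimal ordering emerge simultaneously (including the correct handling of the $B-n_T$ zero singular values of the rectangular $\vect{V}$ against the largest eigenvalues of $\vect{S}_Q'$, which is exactly where the paper's ordering claim lives), and you avoid the paper's somewhat delicate step of arguing that restricting to the diagonalizing bases is without loss of optimality before the constraint has been dealt with. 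The price is only that your final certificate of optimality rests on exhibiting one achiever of the bound rather than characterizing all optimizers, which is all the theorem asks for. The key inequality you call von Neumann's is precisely the paper's Lemma~3, so no genuinely new lemma is needed.
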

\begin{proof}
The proof is given in Appendix~\ref{app:cor_3}.
\end{proof}

\emph{Remarks:}
\begin{enumerate}
\item In the general case of a non Kronecker-structured $\Ical_{adm}$, the solution of the different designs, (\ref{ADGPP2}), (\ref{ASGPP2}) and (\ref{eq:teo4_1})  can be obtained using numerical methods like the semidefinite relaxation
approach described in \cite{KatselisRHB:11}.
\item If $\Ical_{\text{adm}}$ depends on $\vect{H}$, then in order to implement this design, the embedded $\vect{H}$ in
$\Ical_{\text{adm}}$ may be replaced by a previous channel estimate. This implies
that this approach is possible whenever the channel variations allow for such a design. This observation also applies
to the designs in the previous subsections. See also
\cite{Gerencser&Hjalmarsson:07a,Rojas&Aguero&al:08aut}, where the
same issue is discussed for other system identification
applications.
\end{enumerate}

The corresponding performance criterion for the case of the MMSE
estimator is given by
\[
\E_{\widetilde{\vect{H}},\vect{H}} \left\{J(\widetilde{\vect{H}},\vect{H})\right\}=\mathrm{tr}\left\{\Ical_{\text{adm}}(\vect{R}^{-1}+\vect{\Ptilde}^H
\vect{S}^{-1} \vect{\Ptilde})^{-1}\right\}.
\]
In this case, we can derive closed form expressions for the optimal training under assumptions similar to those made in Theorem~\ref{teo:3}. We therefore have the following result:
\begin{thm}\label{teo:5}
Consider the optimization problem
\begin{align} \label{eq:teo5_1}
\begin{array}{cl}
\minimize{\vect{P} \in \mathbb{C}^{n_T \times B}} & \mathrm{tr}\left\{\Ical_{\text{adm}}(\vect{R}^{-1}+\vect{\Ptilde}^H
\vect{S}^{-1} \vect{\Ptilde})^{-1}\right\} \\ %
\text{s.t.}                                 &{\mathrm{tr}(\vect{P}\vect{P}^{H})}\leq \mathcal{P}  %
\end{array}
\end{align}
where $\Ical_{\text{adm}}=\Ical_T^T \otimes
\Ical_R$ as before. Set
$\vect{S}_Q'=\vect{S}_Q^T=\vect{V}_{Q}\vect{\Lambda}_Q\vect{V}_Q^{H}$. Here, we assume that $\vect{V}_{Q} \in \mathbb{C}^{B \times B}$
is a unitary matrix and $\vect{\Lambda}_Q$ a diagonal $B\times B$ matrix containing the eigenvalues of
$\vect{S}_Q'$ in arbitrary order. Assume also that $\vect{R}_T'=\vect{R}_T^T$ with eigenvalue decomposition $\vect{U}_T'\vect{\Lambda}_T'\vect{U}_T'^H$. The diagonal elements of $\vect{\Lambda}_T'$ are assumed to be arbitrarily ordered.  Then, we
have the following cases
\begin{itemize}
\item $\vect{R}_R=\vect{S}_R$: We further discriminate two cases
\begin{itemize}
\item $\Ical_T=\vect{I}$: Then the optimal training is given by a straightforward adaptation of Proposition 2 in \cite{KatselisKT:07}.
\item $\vect{R}_T^{-1}=\Ical_T$: Then, the optimal training matrix $\vect{P}$ equals $\left(\vect{U}_{T}'(\pi_{\rm opt})\vect{D}_P\vect{V}_{Q}^H(\varpi_{\rm opt})\right)^{*}$, where $\pi_{\rm opt},\varpi_{\rm opt}$ stand for the optimal orderings of the eigenvalues of $\vect{R}_T'$ and $\vect{S}_Q'$, respectively. These optimal orderings are determined by Algorithm \ref{alg:optOrd} in Appendix \ref{app:cor_4}. Additionally, define the parameter $m_{*}$ as in eq. (\ref{eq:mstar}) (see Appendix \ref{app:cor_4}). Assuming in the following that, for simplicity of notation, $(\vect{\Lambda}_T')_{i,i}$'s and $(\vect{\Lambda}_Q)_{i,i}$'s have the optimal ordering,
the optimal $(\vect{D}_{P})_{j,j}, j=1,2,\ldots, m_{*}$ are given by the expression
\begin{eqnarray}
\sqrt{\frac{\mathcal{P}+\sum_{i=1}^{m_{*}}\frac{(\vect{\Lambda}_Q)_{i,i}}{(\vect{\Lambda}_T')_{i,i}}}{\sum_{i=1}^{m_{*}}\sqrt{\frac{(\vect{\Lambda}_Q)_{i,i}}{(\vect{\Lambda}_T')_{i,i}}}}\sqrt{\frac{(\vect{\Lambda}_Q)_{j,j}}{(\vect{\Lambda}_T')_{j,j}}}-\frac{(\vect{\Lambda}_Q)_{j,j}}{(\vect{\Lambda}_T')_{j,j}}},\nonumber
\end{eqnarray}
while $(\vect{D}_{P})_{j,j}=0$ for $j=m_{*}+1,\ldots, n_T$.
\end{itemize}
\end{itemize}
\end{thm}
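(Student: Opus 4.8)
The plan is to collapse the matrix objective to a scalar water-filling problem by exploiting the two structural assumptions, and only afterwards to settle the combinatorial ordering question, which I expect to be the real difficulty. \emph{Reduction of the objective.} Writing $\vect{S}^{-1}=\vect{S}_Q'^{-1}\otimes\vect{S}_R^{-1}$ and $\vect{\Ptilde}^H=\vect{P}^*\otimes\vect{I}$, the Kronecker mixed-product rule gives $\vect{\Ptilde}^H\vect{S}^{-1}\vect{\Ptilde}=\vect{X}\otimes\vect{S}_R^{-1}$ with $\vect{X}\triangleq\vect{P}^*\vect{S}_Q'^{-1}\vect{P}^T\succeq\vect{0}$. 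Under $\vect{R}_R=\vect{S}_R$ the term $\vect{R}^{-1}=\vect{R}_T'^{-1}\otimes\vect{R}_R^{-1}$ shares the right Kronecker factor, so the bracket becomes $(\vect{R}_T'^{-1}+\vect{X})\otimes\vect{R}_R^{-1}$ and its inverse is $(\vect{R}_T'^{-1}+\vect{X})^{-1}\otimes\vect{R}_R$. Using $\tr(\vect{A}\otimes\vect{B})=\tr(\vect{A})\tr(\vect{B})$ and $\Ical_{\text{adm}}=\Ical_T'\otimes\Ical_R$, the cost factors as $\tr\{\Ical_R\vect{R}_R\}\cdot\tr\{\Ical_T'(\vect{R}_T'^{-1}+\vect{X})^{-1}\}$, where the first factor is a positive constant. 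In the subcase $\vect{R}_T^{-1}=\Ical_T$ one has $\Ical_T'=\vect{R}_T'^{-1}$, so the problem reduces to minimizing $\tr\{\vect{R}_T'^{-1}(\vect{R}_T'^{-1}+\vect{X})^{-1}\}$ subject to $\tr(\vect{P}\vect{P}^H)\le\mathcal{P}$, with $\vect{P}$ entering only through $\vect{X}$.

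\emph{Optimal eigenstructure.} The key observation is a separation of the two unitary degrees of freedom. Substituting $\vect{Z}=\vect{P}^*\vect{S}_Q'^{-1/2}$ gives $\vect{X}=\vect{Z}\vect{Z}^H$ and $\tr(\vect{P}\vect{P}^H)=\tr(\vect{Z}\vect{S}_Q'\vect{Z}^H)$; writing the singular value form $\vect{Z}=\vect{U}\vect{\Sigma}\vect{W}^H$ with $\vect{U}\in\mathbb{C}^{n_T\times n_T}$, $\vect{W}\in\mathbb{C}^{B\times B}$ unitary, the objective $\tr\{\vect{R}_T'^{-1}(\vect{R}_T'^{-1}+\vect{U}\vect{\Sigma}\vect{\Sigma}^H\vect{U}^H)^{-1}\}$ depends only on $(\vect{U},\{\sigma_i^2\})$, whereas the energy $\sum_i\sigma_i^2(\vect{W}^H\vect{S}_Q'\vect{W})_{i,i}$ depends only on $(\vect{W},\{\sigma_i^2\})$. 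For fixed $\{\sigma_i^2\}$ I would therefore optimize them independently: a standard majorization argument shows the objective-minimizing $\vect{X}$ shares the eigenvectors of $\vect{R}_T'$, so $\vect{U}$ is a column permutation of $\vect{U}_T'$; and $\tr(\vect{Z}\vect{S}_Q'\vect{Z}^H)$ is minimized by taking $\vect{W}=\vect{V}_Q$, so that $\vect{W}^H\vect{S}_Q'\vect{W}=\vect{\Lambda}_Q$. This forces the announced form $\vect{P}=(\vect{U}_T'\vect{D}_P\vect{V}_Q^H)^*$; the only surviving freedom is the \emph{relative} ordering of the columns of $\vect{U}_T'$ and $\vect{V}_Q$, i.e.\ the pairing of the eigenvalues of $\vect{R}_T'$ with those of $\vect{S}_Q'$.

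\emph{Scalarization and water-filling.} With $d_i\triangleq(\vect{D}_P)_{i,i}^2$, the previous step yields $\tr(\vect{P}\vect{P}^H)=\sum_i d_i$ and the cost $\sum_i (\vect{\Lambda}_Q)_{i,i}/[(\vect{\Lambda}_Q)_{i,i}+(\vect{\Lambda}_T')_{i,i}d_i]$. For a \emph{fixed} pairing this is convex, the budget is tight because the cost strictly decreases in each $d_i$, and the stationarity condition $(\vect{\Lambda}_Q)_{i,i}(\vect{\Lambda}_T')_{i,i}/[(\vect{\Lambda}_Q)_{i,i}+(\vect{\Lambda}_T')_{i,i}d_i]^2=\mu$ gives $d_j=\mu^{-1/2}\sqrt{(\vect{\Lambda}_Q)_{j,j}/(\vect{\Lambda}_T')_{j,j}}-(\vect{\Lambda}_Q)_{j,j}/(\vect{\Lambda}_T')_{j,j}$ on the active set, with the remaining $d_j$ clipped to zero. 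Summing over the $m_*$ active modes to eliminate $\mu$ reproduces exactly the stated expression for $(\vect{D}_P)_{j,j}=\sqrt{d_j}$ and identifies $m_*$ as the largest cardinality for which all $d_j$ remain nonnegative.

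\emph{Main obstacle.} The genuine difficulty is the optimal pairing of $\{(\vect{\Lambda}_T')_{i,i}\}$ with $\{(\vect{\Lambda}_Q)_{i,i}\}$. Unlike the MVU design of Theorem~\ref{teo:4}, where the weight and the information matrix coincide and a single monotone rule (descending against ascending) is optimal, here $\vect{R}_T'^{-1}$ appears simultaneously as the weight and inside the inverse, so the post-water-filling objective is not monotone in the paired ratios and the active-set size $m_*$ itself depends on the ordering. I would settle this by a pairwise-exchange argument: swap the partners of two modes, compare the two water-filling values under the closed form above, and show that no transposition improves the ordering produced by Algorithm~\ref{alg:optOrd}. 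Together with the water-level computation this certifies $\pi_{\rm opt},\varpi_{\rm opt}$ and $m_*$ and completes the proof.
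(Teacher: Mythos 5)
Your proposal follows essentially the same route as the paper's proof in Appendix~\ref{app:cor_4}: the same Kronecker factorization under $\vect{R}_R=\vect{S}_R$ that collapses the cost to $\tr\{\Ical_T'(\vect{R}_T'^{-1}+\vect{X})^{-1}\}$ with $\vect{X}=\vect{P}'^H\vect{S}_Q'^{-1}\vect{P}'$, the same majorization/SVD-alignment step forcing the singular bases of the training onto $\vect{U}_T'$ and $\vect{V}_Q$ up to column permutations, and the same scalar water-filling that produces $m_*$ and the stated power loading. The one point where you diverge is the ordering: the paper does not certify $\pi_{\rm opt},\varpi_{\rm opt}$ by a pairwise-exchange argument (which, being only a local-optimality test under transpositions, would not by itself establish global optimality of a candidate pairing), but instead has Algorithm~\ref{alg:optOrd} exhaustively evaluate the post-water-filling objective $n_T-m_*+\bigl(\sum_{i\le m_*}\gamma_i^{-1/2}\bigr)^2/\bigl(\mathcal{P}+\sum_{i\le m_*}\gamma_i^{-1}\bigr)$ over all $n_T!\,B!$ permutation pairs, so the optimal pairing is found by construction and the analytic characterization you flag as the main obstacle is explicitly left open (the descending/ascending rule is offered there only as a heuristic).
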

\begin{proof}
The proof is given in Appendix~\ref{app:cor_4}.
\end{proof}

\emph{Remarks}: Two interesting additional cases complementing the last theorem are the following:
\begin{enumerate}
\item If the modal matrices of $\vect{R}_R$ and $\vect{S}_R$ are the same,
 $\Ical_T=\vect{I}$ and $\Ical_R=\vect{I}$, then the optimal training is given by \cite{Bjornson:10}.

\item In any other case (e.g., if $\vect{R}_R\neq\vect{S_R})$, the training can be found using numerical methods like the semidefinite relaxation
approach described in \cite{KatselisRHB:11}. Note again that this approach can also handle general $\Ical_{adm}$, not necessarily expressed as $\Ical_T^T\otimes \Ical_R$.
\end{enumerate}

As a general conclusion, the objective function of the dual deterministic problems presented in this subsection
can be shown to correspond to Markov bound approximations of the chance constraint (\ref{eq:chanceCon}). According to the analysis
in \cite{Rojas2011}, these approximations should be tighter than the approximations based on confidence ellipsoids presented in Subsections \ref{subsec:ConfEllips-1},
\ref{subsec:ConfEllips-2} and \ref{subsec:ConfEllips-3}, for
practically relevant
values of $\varepsilon$.

\section{Applications}
\label{sec:applic}


\subsection{Optimal Training for Channel Estimation}
\label{subsec:ChEst}

We now consider the channel estimation problem in its standard
context, where the performance metric of interest is the (mean) square error of the
corresponding channel estimator. Linear estimators for
this task are given by (\ref{MVU}), (\ref{MMSE}). The performance
metric of interest is
\[
J(\widetilde{\vect{H}},\vect{H})=\vecop^H(\widetilde{\vect H})
\vecop(\widetilde{\vect H}),
\]
which corresponds to $\Ical_{\text{adm}}=\vect{I}$, i.e., to
$\Ical_T=\vect{I}$ and $\Ical_R=\vect{I}$. The ADGPP and ASGPP are
given by (\ref{ADGPP2}) and (\ref{ASGPP2}), respectively, with the
corresponding substitutions. Their solutions follow directly from
Theorems \ref{teo:2} and \ref{teo:3}, respectively. To the best of
the authors' knowledge, such formulations for the classical MIMO training design problem
are presented here for the first time. Furthermore, solutions to the standard approach
of minimizing the channel MSE subject to a constraint on the training energy budget are
provided by Theorems \ref{teo:4} and \ref{teo:5} as special cases.

\emph{Remark:}
      Although the confidence ellipsoid and Markov bound approximations are generally different~\cite{Rojas2011}, in the simulation
    section we show that their performance is almost identical for reasonable operating $\gamma$-regimes in the specific case of standard channel estimation.

\subsection{Optimal Training for the L-Optimality Criterion}
\label{subsec:L-optimalityChEst}

Consider now a performance metric of the form
\[
J_W(\widetilde{\vect{H}},\vect{H})=\vecop^H(\widetilde{\vect H})\vect{W}
\vecop(\widetilde{\vect H}),
\]
for some positive semidefinite weighting matrix $\vect{W}$. Assume also that
$\vect{W}=\vect{W}_1\otimes \vect{W}_2$ for some positive semidefinite matrices $\vect{W}_1,\vect{W}_2$. Taking the expected value
of this performance metric with respect to either $\widetilde{\vect{H}}$ or both $\widetilde{\vect{H}}$ and $\vect{H}$ leads to
the well-known L-optimality criterion for optimal experiment design in statistics~\cite{Kiefer-74}. In this case, $\Ical_T=\vect{W}_1^T$ and
$\Ical_R=\vect{W}_2$. In the context of MIMO communication systems, such a performance metric may arise, e.g., if we want to estimate the MIMO channel  having some deficiencies in either the transmit and/or the receive antenna arrays. The simplest case would be both $\vect{W}_1$ and $\vect{W}_2$ being diagonal with nonzero entries in the interval $[0,1]$, $\vect{W}_1$ representing the deficiencies in the transmit antenna array and $\vect{W}_2$ in the receive array. More general matrices can be considered if we assume cross-couplings between the transmit and/or receive antenna elements.

\emph{Remark}: The numerical approach of~\cite{KatselisRHB:11} mentioned after Theorems \ref{teo:4} and \ref{teo:5} can handle general weighting matrices $\vect{W}$, not necessarily Kronecker-structured.

\subsection{Optimal Training for Channel Equalization}
\label{subsec:ChEq}

In this subsection we consider the problem of
estimating a transmitted signal sequence $\{\vect{x}(t)\}$ from
the corresponding received signal sequence $\{\vect{y}(t)\}$.
Among a wide range of methods that are available
\cite{PaulrajNG:03,Verdu:98}, we will consider the MMSE equalizer
and for mathematical tractability we will approximate it by the
non-causal Wiener filter. Note that for reasonably long block
lengths, the MMSE estimate becomes similar to the non-causal
Wiener filter~\cite{Haykin:01}. Thus, the optimal training design
based on the non-causal Wiener filter should also provide good
performance when using an MMSE equalizer.

\subsubsection{Equalization using exact channel state information}

Let us first assume that $\vect{H}$ is available. In this ideal
case, and with the transmitted signal being weakly stationary with
spectrum $\vect{\Phi}_x$, the MSE-optimal estimate of the transmitted
signal $\vect{x}(t)$ from the received observations of
$\vect{y}(t)$ can be obtained according to
\begin{equation}
    \hat{\vect{x}}(t; \vect{H}) = \vect{F}(q;\vect{H}) \vect{y}(t)
\end{equation}
where $q$ is the unit time shift operator, $[q
\vect{x}(t)=\vect{x}(t+1)]$, and the non-causal Wiener filter
$\vect{F}(e^{j\omega};\vect{H})$ is given by
\begin{equation}\label{ncWiener}
\begin{split}
    \vect{F}(e^{j\omega};\vect{H}) &= \vect{\Phi}_{xy}(\omega)\vect{\Phi}_{y}^{-1}(\omega) \\ &= \vect{\Phi}_x(\omega)\vect{H}^H\left(\vect{H}\vect{\Phi}_x(\omega)\vect{H}^H +
    \vect{\Phi}_n(\omega)\right)^{-1}.
\end{split}
\end{equation}
Here, $\vect{\Phi}_{xy}(\omega)=\vect{\Phi}_x(\omega)\vect{H}^H$
denotes the cross-spectrum between $\vect{x}(t)$ and
$\vect{y}(t)$, and
\begin{equation}\label{phiy}
 \vect{\Phi}_y(\omega) = \vect{H}\vect{\Phi}_x(\omega)\vect{H}^H+\vect{\Phi}_n(\omega)
 \end{equation}
 is the spectral density of $\vect{y}(t)$. Using our assumption that
$\vect{\Phi}_x(\omega) = \lambda_x \vect{I}$, we obtain the
simplified expression
\begin{eqnarray}
    \vect{F}(e^{j\omega};\vect{H}) &=&  \vect{H}^H\left(\vect{H}\vect{H}^H +
    \vect{\Phi}_n(\omega)/\lambda_x\right)^{-1}.\label{eq:MMSEeq1}
\end{eqnarray}

\emph{Remark:}
    Assuming nonsingularity of $\vect{\Phi}_n(\omega)$ for every $\omega$, the
    MMSE equalizer is applicable for all values of the pair
    $(n_T,n_R)$.

\subsubsection{Equalization using a channel estimate}

Consider now the situation where the exact channel $\vect{H}$
is unavailable, but we only have an estimate $\widehat{\vect{H}}$. When
we replace $\vect{H}$ by its estimate in the expressions above, the
estimation error for the equalizer will increase. While the
increase in the bit error rate would be a natural measure of the
quality of the channel estimate $\widehat{\vect{H}}$,
for simplicity we consider the total MSE of the
difference, $\hat{\vect{x}}(t;\vect{H}+\widetilde{\vect H}) -
\hat{\vect{x}}(t; \vect{H}) = \vect{\Delta}(q;\widetilde{\vect
H},\vect{H}) \vect{y}(t)$ (note that
$\widehat{\vect{H}}=\vect{H}+\widetilde{\vect H}$), using the
notation $\vect{\Delta}(q;\widetilde{\vect H},\vect{H}) \triangleq
\vect{F}(q; \vect{H}+\widetilde{\vect H})-\vect{F}(q;\vect{H})$.
In view of this, we will use the channel equalization (CE)
performance metric
\begin{align}\label{Jdef}
 &   J_{CE}(\widetilde{\vect H},\vect{H})=\E \left\{ [\vect{\Delta}(q;\widetilde{\vect H},\vect{H})\vect{y}(t)]^H[\vect{\Delta}(q;\widetilde{\vect H},\vect{H})\vect{y}(t)]  \right\}
\notag \\
& = \E\left\{\tr\left( [\vect{\Delta}(q;\widetilde{\vect
H},\vect{H})\vect{y}(t)][ \vect{\Delta}(q;\widetilde{\vect
H},\vect{H})\vect{y}(t)]^H\right) \right\}
\notag \\
& =  \frac{1}{2\pi} \int_{-\pi}^{\pi} \tr\left(
\vect{\Delta}(e^{j\omega};\widetilde{\vect
H},\vect{H})\vect{\Phi}_y(\omega)\vect{\Delta}^H(e^{j\omega};\widetilde{\vect
H},\vect{H})\right)\; d\omega.
\end{align}
We see
that the poorer the accuracy of the estimate, the larger the
performance metric $J_{CE}(\widetilde{\vect H},\vect{H})$ and,
thus, the larger the  performance loss of the equalizer.
Therefore, this performance metric is a reasonable candidate to use
when formulating our training sequence design problem. Indeed, the
Wiener equalizer based on the estimate
$\widehat{\vect{H}}=\vect{H}+\widetilde{\vect H}$ of $\vect{H}$
can be deemed to have a satisfactory performance if
$J_{CE}(\widetilde{\vect H},\vect{H})$ remains below some
user-chosen threshold. Thus, we will use $J_{CE}$ as $J$ in
problems \eqref{DGPP} and \eqref{SGPP}. 
Though these problems are
not convex,  we show in Appendix~\ref{app:quad} how they
can be convexified, provided some approximations are made.

\emph{Remarks:}
\begin{enumerate}
    \item The excess MSE $J_{CE}(\widetilde{\vect H},\vect{H})$
    quantifies the distance of the MMSE equalizer using the channel
    estimate $\widehat{\vect{H}}$ over the \emph{clairvoyant} MMSE
    equalizer, i.e., the one using the true channel. This
    performance metric is not the same as the classical MSE in the
    equalization context, where the difference $\hat{\vect{x}}(t;\vect{H}+\widetilde{\vect H}) -
\vect{x}(t)$ is considered instead of
$\hat{\vect{x}}(t;\vect{H}+\widetilde{\vect H}) -
\hat{\vect{x}}(t; \vect{H})$. However, since in practice the best
transmit vector estimate that can be attained is the clairvoyant
one, the choice of $J_{CE}(\widetilde{\vect H},\vect{H})$  is
justified. This selection allows for a performance metric approximation given by
(\ref{J(Hhat,H)}).
   \item There are certain cases of interest, where $J_{CE}(\widetilde{\vect
   H},\vect{H})$ approximately coincides with the classical equalization MSE.
   Such a case occurs when $n_R\geq n_T$, $\vect{H}$ is full column rank and the SNR is high
   during data transmission.
\end{enumerate}

\subsection{Optimal training for Zero-Forcing (ZF) Precoding}
\label{subsec:ZFpc}

Apart from receiver side channel equalization, as another example
of how to apply the channel estimate we consider point-to-point
zero-forcing precoding, also known as channel
inversion~\cite{HochwaldPS:05a}. Here the channel estimate is fed
back to the transmitter and its (pseudo-)inverse is used as a
linear precoder. The data transmission is described by
\begin{equation*}
  \vect{y}(t)=\vect{H}\vect{\Psi}\vect{x}(t) +
  \vect{v}(t)
\end{equation*}
where the precoder is $\vect{\Psi}=\Hhat^\dagger$, i.e.,
$\vect{\Psi}=\Hhat^H(\Hhat \Hhat^H)^{-1}$ if we limit ourselves to
the practically relevant case $n_T \geq n_R$ and assume that
$\Hhat$ is full rank. Note that $\vect{x}(t)$ is an $n_R\times 1$
vector in this case, but the transmitted vector is
$\vect{\Psi}\vect{x}(t)$, which is $n_T\times 1$.

Under these assumptions, and following the same strategy and
notation as in Appendix~\ref{app:quad}, we get
\begin{multline}
  \vect{y}(t;\Hhat) - \vect{y}(t; \vect{H}) = \vect{H} \Hhat^\dagger
  \vect{x}(t) + \vect{v} - (\vect{H} \vect{H}^\dagger \vect{x}(t) +
  \vect{v}) \\
  = (\Hhat\Hhat^\dagger - \Htilde\Hhat^\dagger - \vect{I}) \vect{x}(t)
  \simeq -\Htilde \vect{H}^\dagger\vect{x}(t)
\end{multline}
Consequently, a quadratic approximation of the cost function is
given by
\begin{eqnarray}
  \label{eq:costChannelInv}
  J_{\text{ZF}}(\widetilde{\vect H},\vect{H})&=& \E\left\{ [\vect{y}(t;\Hhat) -
    \vect{y}(t; \vect{H})]^H [\vect{y}(t;\Hhat) - \vect{y}(t;
    \vect{H})] \right\}\nonumber \\
  &\simeq& \lambda_x \vecop^H(\Htilde) \left((\vect{H}^\dagger
    (\vect{H}^\dagger)^H)^T \otimes \vect{I}\right) \vecop(\Htilde) \nonumber \\
  &=& \vecop^H(\Htilde) (\Ical_T^T \otimes \Ical_R) \vecop(\Htilde),
\end{eqnarray}
if we define $\Ical_T = \lambda_x \vect{H}^\dagger
(\vect{H}^\dagger)^H = \lambda_x \vect{H}^H  (\vect{H}
\vect{H}^H)^{-2} \vect{H}$ and $\Ical_R=\vect{I}$.

\emph{Remark:} The cost functions of (\ref{eq:teo4_1}) and (\ref{eq:teo5_1}) reveal the fact that any performance-oriented
training design is a compromise between the strict channel estimation accuracy and the desired
accuracy related to the end performance metric at hand. Caution is needed to identify cases where
the performance-oriented design may severely degrade the channel estimation accuracy, annihilating
all gains from such a design. In the case of ZF precoding, if $n_T>n_R$, $\Ical_T$ will have rank at most
$n_R$ yielding a training matrix $\vect{P}$ with only $n_R$ active eigendirections. This is in contrast to the secondary
target, which is the channel estimation accuracy. Therefore, we expect ADGPP, ASGPP and the approaches in Subsection~\ref{subsec:DetPerDes}
to behave abnormally in this case. Thus, we propose the performance-oriented design only when $n_T=n_R$ in the context of the ZF precoding.

\section{Numerical Examples}
\label{sec:sims}

\begin{figure}
  \centering
  \includegraphics[width=7cm]{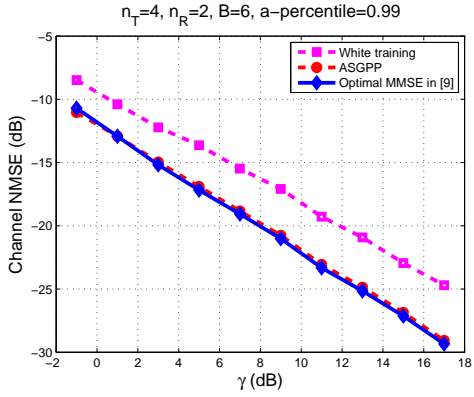}\\
  \caption{$n_T=4, n_R=2, B=6, a(\%)=99$: Channel Estimation NMSE based on Subection~\ref{subsec:ChEst} with $\vect{R}_R= \vect{S}_R$.}\label{fig:1}
\end{figure}

\begin{figure}
  \centering
  \includegraphics[width=7cm]{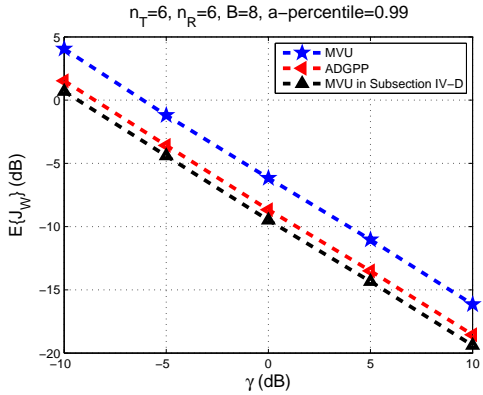}\\
\caption{$n_T=6, n_R=6, B=8, a(\%)=99$: L-optimality criterion with arbitrary but positive-semidefinite $\vect{W}_1, \vect{W}_2$ for the MVU estimator.}\label{fig:2}
\end{figure}

\begin{figure}
  \centering
  \includegraphics[width=7cm]{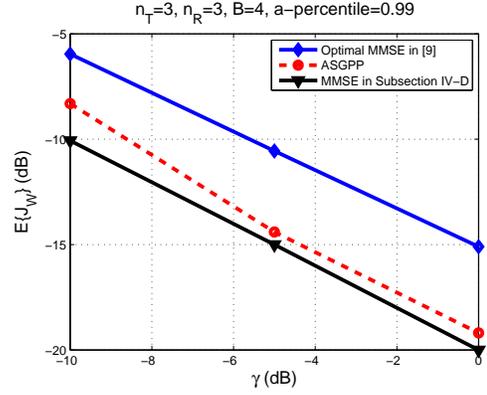}\\
  \caption{$n_T=3, n_R=3, B=4, a(\%)=99$: L-optimality criterion with arbitrary but positive-semidefinite $\vect{W}_1, \vect{W}_2$ for the MMSE estimator with $\vect{R}_R= \vect{S}_R$.}\label{fig:3}
\end{figure}

\begin{figure}
  \centering
  \includegraphics[width=7cm]{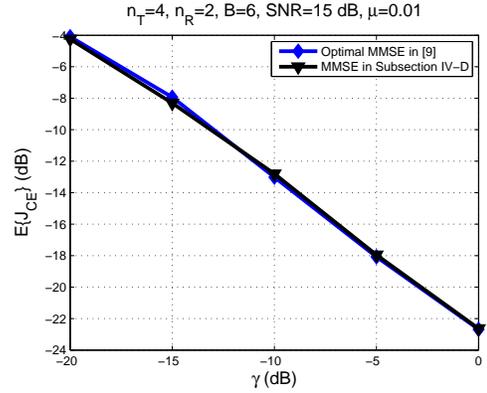}\\
  \caption{$n_T=4, n_R=2, B=6, {\rm SNR}=15 {\rm dB}, \mu=0.01$: MMSE Channel Equalization with $\vect{R}_R\neq \vect{S_R}$.}\label{fig:4}
\end{figure}

\begin{figure}
  \centering
  \includegraphics[width=7cm]{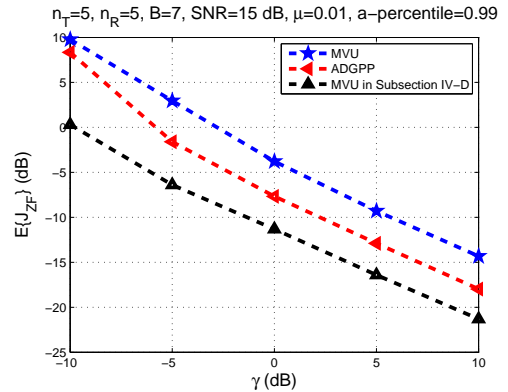}\\
  \caption{$n_T=5, n_R=5, B=7, {\rm  SNR}=15 {\rm dB}, a(\%)=99, \mu=0.01$: ZF precoding based on Subection~\ref{subsec:ZFpc} for the MVU estimator. $\Ical_{\text{adm}}$ is based on a previous
  channel estimate.}\label{fig:5}
\end{figure}

\begin{figure}
  \centering
  \includegraphics[width=7cm]{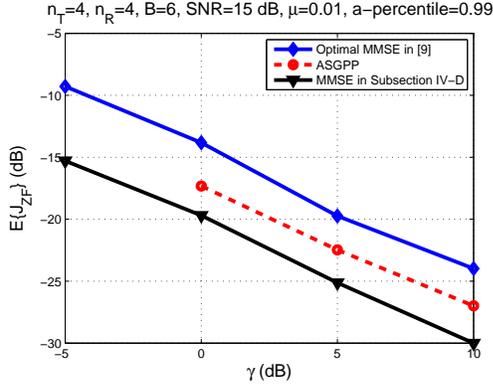}\\
  \caption{$n_T=4, n_R=4, B=6, {\rm  SNR}=15 {\rm dB}, \mu=0.01, a(\%)=99$: ZF precoding MSE based on Subection~\ref{subsec:ZFpc}  for the MMSE estimator with $\vect{R}_R= \vect{S}_R$. $\Ical_{\text{adm}}$ is based on a previous
  channel estimate.}\label{fig:6}
\end{figure}

\begin{figure}
  \centering
  \includegraphics[width=7cm]{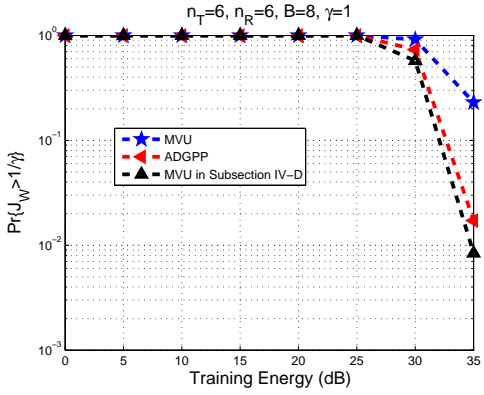}\\
  \caption{$n_T=6, n_R=6, B=8, \gamma=1$: Outage probability for the
    L-optimality criterion with the MVU estimator. The accuracy
    parameter is $\gamma=1$.}\label{fig:7}
\end{figure}

\begin{figure}
  \centering
  \includegraphics[width=7cm]{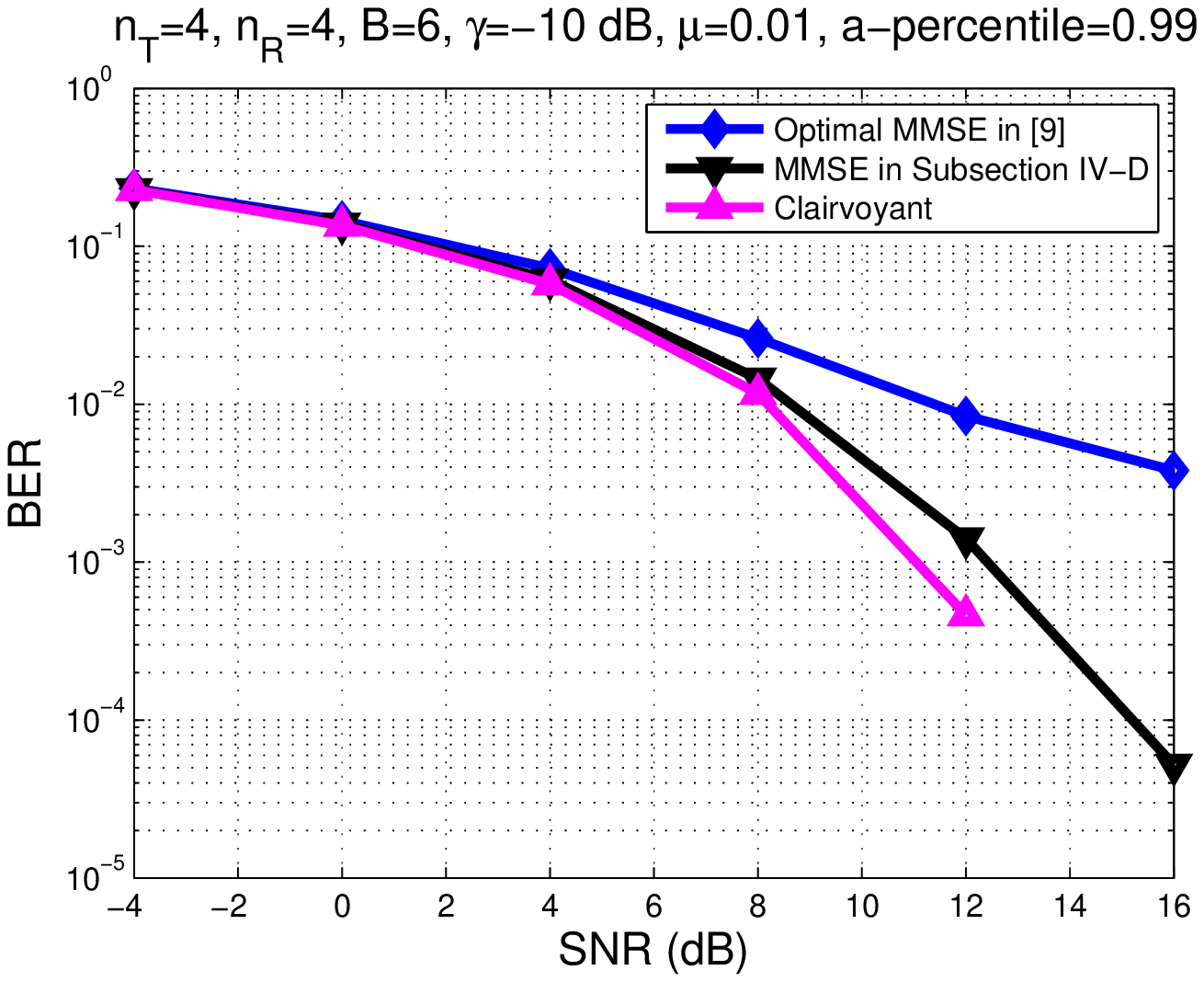}\\
  \caption{$n_T=4, n_R=4, B=6, \gamma=-10 {\rm dB}, \mu=0.01, a(\%)=99$: BER performance using the signal estimates produced by the corresponding schemes in Fig.~\ref{fig:6} with $\vect{R}_R= \vect{S}_R$ and $\gamma=-10$~dB.  $\Ical_{\text{adm}}$ is based on a previous
  channel estimate.}\label{fig:8}
\end{figure}

\begin{figure}
  \centering
  \includegraphics[width=7cm]{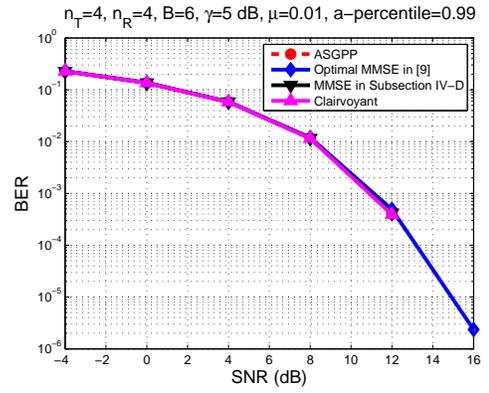}\\
  \caption{$n_T=4, n_R=4, B=6, \gamma=5 {\rm dB}, \mu=0.01, a(\%)=99$: BER performance using the signal estimates produced by the corresponding schemes in Fig.~\ref{fig:6} with $\vect{R}_R= \vect{S}_R$ and $\gamma=5$~dB.  $\Ical_{\text{adm}}$ is based on a previous
  channel estimate.}\label{fig:9}
\end{figure}

The purpose of this section is to examine the performance of
optimal training sequence designs, and compare them with existing
methods. For the channel estimation MSE figure, we plot the
normalized MSE (NMSE), i.e.,
$\E(\|\mathbf{H}-\mathbf{\widehat{H}}\|^{2}/\|\mathbf{H}\|^{2})$,
versus the accuracy parameter $\gamma$.
In all figures, fair comparison among the presented
schemes is ensured via training energy equalization. Additionally, the matrices $\vect{R}_T,
\vect{R}_R, \vect{S}_Q, \vect{S}_R$ follow the exponential model,
that is, they are built according to
\begin{equation}
(\vect{R})_{i,j}=r^{j-i},\;\; j\geq i, \label{eq:expMod}
\end{equation}
where $r$ is the (complex) normalized correlation coefficient with
magnitude $\rho=|r|<1$. We choose
to examine the high correlation scenario for all the presented
schemes. Therefore, in all plots $|r|=0.9$ for all matrices
$\vect{R}_T, \vect{R}_R, \vect{S}_Q, \vect{S}_R$. Additionally,
the transmit SNR during data transmission is chosen to be $15$~dB, when channel equalization and
ZF precoding are considered.
High SNR expressions are therefore used for optimal training
sequence designs. Since the optimal pilot sequences depend on the true
channel, we have for these two applications additionally assumed that the channel changes from block to block according to the relationship
$\vect{H}_i=\vect{H}_{i-1}+\mu \vect{E}_i$, where $\vect{E}_i$ has the same Kronecker structure as $\vect{H}$ and it is completely independent
from $\vect{H}_{i-1}$. The estimated $\vect{H}_{i-1}$ is used in the pilot design. In Figs.~\ref{fig:4}, \ref{fig:5}, \ref{fig:6}, \ref{fig:8} and \ref{fig:9} the value of $\mu$ is $0.01$.

In Fig.~\ref{fig:1} the channel estimation NMSE
performance versus the accuracy $\gamma$ is presented for three
different schemes. The scheme `ASGPP' is the optimal Wiener
filter together with the optimal guaranteed performance training
matrix described in Subsection~\ref{subsec:ChEst}. `Optimal MMSE in \cite{Bjornson:10}'
is the scheme presented in~\cite{Bjornson:10}, which solves the
optimal training problem for the \emph{vectorized} MMSE,
operating on $\mathrm{vec}(\vect{Y})$. This solution is a special case
in the statement of Theorem \ref{teo:5} for $\Ical_{adm}=\vect{I}$, i.e., $\Ical_T=\vect{I}$ and $\Ical_R=\vect{I}$. Finally, the scheme `White training'
corresponds to the use of the vectorized MMSE filter at the
receiver, with a white training matrix, i.e., one having equal
singular values and arbitrary left and right singular matrices.
This scheme is justified when the receiver knows the involved
channel and noise statistics, but does not want to sacrifice
bandwidth to feedback the optimal training matrix to the
transmitter. This scheme is also justified in fast fading
environments. In Fig.~\ref{fig:1}, we assume that $\vect{R}_R=\vect{S}_R$ and we implement the corresponding optimal training design for each scheme. 
`ASGPP' is implemented first for a certain
value of $\gamma$ and the rest of the schemes are forced to have
the same training energy. The `Optimal MMSE in \cite{Bjornson:10}' and `ASGPP'
schemes have the best and almost identical MSE performance. This indicates that for the problem
of training design with the classical channel estimation MSE, the confidence ellipsoid relaxation
of the chance constraint and the relaxation based on the Markov bound in Subsection \ref{subsec:DetPerDes} deliver
almost identical performances.

Figs.~\ref{fig:2} and \ref{fig:3} demonstrate the
L-optimality average performance metric $E\{J_{W}\}$ versus $\gamma$. Fig.~\ref{fig:2} corresponds to the L-optimality
criterion based on MVU estimators and Fig.~\ref{fig:3} is based on MMSE estimators.
In Fig.~\ref{fig:2}, the scheme `MVU' corresponds to the optimal training for channel estimation when the MVU estimator is used. This training is given by
Theorem \ref{teo:4} for $\Ical_{adm}=\vect{I}$, i.e., $\Ical_T=\vect{I}$ and $\Ical_R=\vect{I}$. `MVU in Subsection \ref{subsec:DetPerDes}' is again the MVU estimator
 based on the same theorem but for the correct $\Ical_{adm}$. The scheme `MMSE in Subsection \ref{subsec:DetPerDes}'
is given by the numerical solution mentioned below Theorem \ref{teo:5}, since $\vect{W}_1$ is different than the cases where a closed form solution is possible. Figs.~\ref{fig:2} and \ref{fig:3} clearly show that both the confidence ellipsoid and Markov bound
approximations are better than the optimal training for standard channel estimation. Therefore, for this problem the application-oriented training design is superior compared to training designs with respect to the quality of the channel estimate. 

Fig.~\ref{fig:4} demonstrates the performance of optimal training designs for the MMSE estimator in the context of MMSE channel equalization. We
assume that $\vect{R}_R\neq \vect{S}_R$, since the high SNR expressions for $\Ical_{adm}$ in the context of MMSE channel equalization in Appendix \ref{app:quad} indicate that $\Ical_T=\vect{I}$ for this application and according to Theorem \ref{teo:5} the optimal training corresponds to the optimal training for channel estimation in \cite{KatselisKT:07}. We observe that the curves almost coincide. Moreover, it can be easily verified that for MMSE channel equalization with the MVU estimator, the optimal training designs given by Theorems \ref{teo:2} and \ref{teo:4} differ slightly only in the optimal power loading. 
These
observations essentially show that the optimal training designs for the MVU and MMSE
estimators in the classical channel estimation setup are nearly optimal for the application of MMSE channel equalization. This relies on the fact that
for this particular application, $\Ical_T=\vect{I}$ in the high data transmission SNR regime.

Figs.~\ref{fig:5} and \ref{fig:6} present the corresponding performances in the case of the ZF precoding. The descriptions of the schemes are as before.
In Fig.~\ref{fig:6}, we
assume that $\vect{R}_R=\vect{S}_R$. The superiority of the application-oriented designs for the ZF precoding application is apparent in these plots. Here, $\Ical_T\neq \vect{I}$ and this is why the optimal training for the channel estimate works less well in this application. Moreover, the ``ASGPP'' is plotted for $\gamma\geq 0$~dB, since for smaller values of $\gamma$ all the eigenvalues of $\vect{B}= [c
\lambda_{\textrm{max}}(\vect{S}_{R} \Ical_R)
\Ical_T-\vect{R}_T^{-1}]_+$ are equal to zero for this particular set of parameters defining Fig.~\ref{fig:6}.

Fig.~\ref{fig:7} presents an outage plot in the context of the L-optimality criterion for the MVU estimator. We assume that $\gamma=1$. We plot
${\rm Pr}\left\{J_W>1/\gamma\right\}$ versus the training power. This plot indirectly verifies that the confidence ellipsoid relaxation of the chance constraint
given by the scheme ``ASGPP'' is not as tight as the Markov bound approximation given by the scheme ``MVU in Subsection \ref{subsec:DetPerDes}''.

Finally, Figs.~\ref{fig:8} and \ref{fig:9} present the BER performance of the nearest neighbor rule applied to the signal estimates produced by the corresponding schemes in Fig.~\ref{fig:6}, when the QPSK modulation is used. The ``Clairvoyant'' scheme corresponds to the ZF precoder with perfect channel knowledge. The channel estimates have been obtained for $\gamma=-10$ and $5$~dB, respectively. Even if the application-oriented estimates are not optimized for the BER performance metric, they lead to better performance than the `Optimal MMSE in \cite{Bjornson:10}' scheme as is apparent in Fig.~\ref{fig:8}. In Fig.~\ref{fig:9}, the performances of all schemes approximately coincide. This is due to the fact that for $\gamma=5$~dB all channel estimates are very good, thus leading to symbol MSE performance differences that have negligible impact on the BER performance.

\section{Conclusions}
\label{sec:concl}
In this contribution, we have presented a quite general framework
for MIMO training sequence design subject to flat and block
fading, as well as spatially and temporally correlated Gaussian
noise. The main contribution has been to incorporate the objective
of the channel estimation into the design.
We have shown that by a suitable approximation of
$J(\widetilde{\vect{H}},\vect{H})$, it is possible to solve this
type of problem for several interesting applications such
as standard MIMO channel estimation, L-optimality criterion, MMSE channel equalization and ZF precoding. For these problems, we have numerically demonstrated
the superiority of the schemes derived in this paper. Additionally, the proposed framework is valuable since it provides a
universal way of posing different estimation-related problems in
communication systems. We have seen that it shows interesting promise for, e.g., ZF precoding and it may yield even greater end
performance gains in estimation problems related to communication
systems, when approximations can be avoided, depending on the end
performance metric at hand.

\appendices

\section{Approximating the performance measure for MMSE Channel Equalization} \label{app:quad}

In order to obtain the approximating set ${\cal D}_{adm}$, let us
first denote the integrand in the performance metric \eqref{Jdef}
by 
 \begin{equation}\label{jprim}
  J'(\omega;  \widetilde{\vect H},\vect{H})= \tr\left( \vect{\Delta}(e^{j\omega}; \widetilde{\vect H},\vect{H})\vect{\Phi}_y(\omega)\vect{\Delta}^H(e^{j\omega}; \widetilde{\vect
  H},\vect{H})\right).
 \end{equation}
 In addition, let $\simeq$
denote an equality in which only dominating terms with respect to
$||\Htilde||$ are retained. Then, using (\ref{eq:MMSEeq1}), we
observe that
\begin{equation}\label{deltaapprox}
\begin{split}
 \vect{\Delta}(&\ejo; \Htilde,\vect{H}) = \vect{F}(\ejo;\vect{H}+\Htilde) - \vect{F}(\ejo;\vect{H}) \\
 &\simeq \lambda_x \Htilde^H \vect{\Phi}_y^{-1}  - \lambda_x^2 \vect{H}^H \vect{\Phi}_y^{-1} (\vect{H}
    \Htilde^H + \Htilde \vect{H}^H  ) \vect{\Phi}_y^{-1} \\
    = \lambda_x \Big( &\underbrace{\left( \vect{I} \!-\! \lambda_x \vect{H}^H \vect{\Phi}_y^{-1} \vect{H}
      \right)}_{= \vect{Q}} \Htilde^H \vect{\Phi}_y^{-1} -
      \lambda_x
    \vect{H}^H \vect{\Phi}_y^{-1} \Htilde \vect{H}^H
    \vect{\Phi}_y^{-1} \Big)
    \end{split}
\end{equation}
where we omitted the argument $\omega$ for simplicity. Inserting
\eqref{deltaapprox} in \eqref{jprim} results in the approximation
\begin{align}\label{traceapprox}
     J'(\omega; \Htilde,\vect{H}) & \simeq  \lambda_x^2 \tr\Bigl( \vect{Q} \Htilde^H \vect{\Phi}_y^{-1} \Htilde \vect{Q}
     \notag \\ & \qquad + \lambda_x^2 \left ( \vect{H}^H \vect{\Phi}_y^{-1}
  \Htilde \vect{H}^H \vect{\Phi}_y^{-1} \vect{H} \Htilde^H \vect{\Phi}_y^{-1} \vect{H} \right)\notag \\
   & \qquad -
  \lambda_x \vect{Q} \Htilde^H \vect{\Phi}_y^{-1} \vect{H} \Htilde^H \vect{\Phi}_y^{-1} \vect{H}
  \notag \\ & \qquad - \lambda_x \vect{H}^H \vect{\Phi}_y^{-1}
  \Htilde \vect{H}^H \vect{\Phi}_y^{-1} \Htilde \vect{Q}  \Bigr).
\end{align}
To rewrite this into a quadratic form in terms of
$\vecop(\Htilde)$ we use the facts that $\tr(\vect{A}\vect{B}) =
\tr(\vect{B}\vect{A})
=\vecop^T(\vect{A}^T)\vecop(\vect{B})=\vecop^H(\vect{A}^H)\vecop(\vect{B})$
and $\vecop(\vect{A}\vect{B}\vect{C})=(\vect{C}^T\otimes \vect{A})
\vecop(\vect{B})$ for matrices $\vect{A},\ \vect{B},$ and
$\vect{C}$ of compatible dimensions. Hence, we can rewrite
\eqref{traceapprox} as
\begin{align}\label{vecapprox}
&J'(\omega; \Htilde,\vect{H})  \simeq \vecop^H (\Htilde)
[\lambda_x^2 {\vect{Q}^2}^T\otimes \vect{\Phi}_y^{-1}]\vecop( \Htilde) \notag \\
& \quad + \vecop^H( \Htilde)[\lambda_x^4(\vect{H}^H
\vect{\Phi}_y^{-1} \vect{H})^T \otimes  \vect{\Phi}_y^{-1}
\vect{H}\vect{H}^H \vect{\Phi}_y^{-1}] \vecop(\Htilde)
  \notag \\
 &\quad   - \vecop^H(\Htilde)  [\lambda_x^3 (\vect{\Phi}_y^{-1}\vect{H}\vect{Q} )^T  \otimes \vect{\Phi}_y^{-1} \vect{H}] \vecop( \Htilde^H)
 \notag \\
  &\quad - \vecop^H (   \Htilde^H)  [\lambda_x^3(\vect{Q}\vect{H}^H\vect{\Phi}_y^{-1})^T  \otimes  \vect{H}^H \vect{\Phi}_y^{-1}] \vecop(\Htilde
  ).
\end{align}
In the next step,  we introduce the permutation matrix $\vect{\Pi}$
defined such that $\vecop(\Htilde^T) = \vect{\Pi}\vecop(\Htilde)$
for every $\Htilde$ to rewrite \eqref{vecapprox} as
\begin{align}\label{quadapprox}
&  J'(\omega; \Htilde,\vect{H})  \simeq
  \vecop^H (\Htilde) [\lambda_x^2{\vect{Q}^2}^T\otimes
\vect{\Phi}_y^{-1}]\vecop( \Htilde) \notag \\ & \quad + \vecop^H(
\Htilde)[\lambda_x^4(\vect{H}^H \vect{\Phi}_y^{-1} \vect{H})^T
\otimes \vect{\Phi}_y^{-1} \vect{H}\vect{H}^H \vect{\Phi}_y^{-1}]
\vecop(\Htilde)
  \notag \\
 &\quad   - \vecop^H(\Htilde)  [\lambda_x^3(\vect{\Phi}_y^{-1}\vect{H}\vect{Q} )^T  \otimes \vect{\Phi}_y^{-1} \vect{H}] \vect{\Pi}\vecop( \Htilde^*)
 \notag \\
  &\quad - \vecop^H (   \Htilde^*) \vect{\Pi}^T  [\lambda_x^3(\vect{Q}\vect{H}^H\vect{\Phi}_y^{-1})^T  \otimes  \vect{H}^H \vect{\Phi}_y^{-1}] \vecop(\Htilde
  ).
 \end{align}
We have now obtained a quadratic form. Note indeed that
the last two terms are just complex conjugates of each other and
thus we can write them as two times their real part.

\subsection{High SNR analysis}
In order to obtain a simpler expression for $\Ical_{\text{adm}}$,
we will assume high SNR in the data transmission phase. We
consider the practically relevant case where ${\rm
rank}\left(\vect{H}\right)=\min(n_T,n_R)$. Depending on the rank
of the channel matrix $\vect{H}$ we will have three different
cases:

\subsubsection*{Case 1: $\rank(\vect{H}) =n_R < n_T$}
Under this assumption, it can be shown that both the first and the
second terms on the right hand side of \eqref{quadapprox}
contribute to $\Ical_{\text{adm}}$. We have $\vect{Q}\to
\vect{\Pi}^\perp_{\vect{H}^H}$ and $\lambda_x \vect{\Phi}_y^{-1}
\to (\vect{H}\vect{H}^{H})^{-1}$ for high SNR. Here, and in what
follows, we use $\vect{\Pi}_\vect{X}=\vect{X}\vect{X}^\dagger$ to
denote the orthogonal projection matrix on the range-space of
$\vect{X}$ and $\vect{\Pi}_\vect{X}^\perp =
\vect{I}-\vect{\Pi}_\vect{X}$ to denote the projection on the
nullspace
of $\vect{X}^H$. Moreover, $\lambda_x
\vect{H}^H \vect{\Phi}_y^{-1} \vect{H} \to
\vect{\Pi}_{\vect{H}^H}$ and $\lambda_x^2 \vect{\Phi}_y^{-1}
\vect{H}\vect{H}^H \vect{\Phi}_y^{-1} \to (\vect{H}\vect{H}^{H}
)^{-1}$ for high SNR. As
$\vect{\Pi}^\perp_{\vect{H}^H}+\vect{\Pi}_{\vect{H}^H} =
\vect{I}$, summing the contributions from the first two terms in
\eqref{quadapprox} finally gives the high SNR approximation
\begin{equation}\label{IadmhighsnrCase2}
\Ical_{\text{adm}} = \lambda_x \vect{I}  \otimes
(\vect{H}\vect{H}^{H} )^{-1} .\end{equation}

\subsubsection*{Case 2: $\rank(\vect{H}) =n_R = n_T$}
For the non-singular channel case, the second term on the right
hand side of \eqref{quadapprox} dominates. Here, we have
 $\lambda_x \vect{H}^H \vect{\Phi}_y^{-1} \vect{H} \to \vect{I}$ and
 $\lambda_x^2 \vect{\Phi}_y^{-1} \vect{H}\vect{H}^H \vect{\Phi}_y^{-1} \to (\vect{H}\vect{H}^H)^{-1}$ for high SNR. Clearly, this results in the same expression for $\Ical_{\text{adm}}$ as in Case 1, namely,
\begin{equation}\label{IadmhighsnrCase4}
\Ical_{\text{adm}} = \lambda_x \vect{I}  \otimes
(\vect{H}\vect{H}^{H} )^{-1} .\end{equation}

\subsubsection*{Case 3:  $\rank(\vect{H}) =n_T < n_R$}

In this case, the second term on the right hand side of
\eqref{quadapprox} dominates. When $\rank(\vect{H}) =n_T$ we get
 $\lambda_x \vect{H}^H \vect{\Phi}_y^{-1} \vect{H} \to \vect{I}$ and $\lambda_x^2 \vect{\Phi}_y^{-1} \vect{H}\vect{H}^H \vect{\Phi}_y^{-1} \to \vect{\Phi}_n^{-1/2}[\vect{\Phi}_n^{-1/2}\vect{H}\vect{H}^H\vect{\Phi}_n^{-1/2}]^\dagger \vect{\Phi}_n^{-1/2}
$ for high SNR.
Using these approximations finally gives the high SNR approximation
\begin{equation}\label{IadmhighsnrCase3}
\Ical_{\text{adm}} =\lambda_x \vect{I}  \otimes \left(\frac{1}{2\pi}\int_{-\pi}^\pi \vect{\Phi}_n^{-1/2}[\vect{\Phi}_n^{-1/2}\vect{H}\vect{H}^H\vect{\Phi}_n^{-1/2}]^\dagger \vect{\Phi}_n^{-1/2}  \ d\omega\right).\nonumber\\
\end{equation}

\subsection{Low SNR analysis}
For the low SNR regime, we do not need to differentiate our
analysis for the cases $n_T\geq n_R$ and $n_T<n_R$, because now
$\vect{\Phi}_y\rightarrow \vect{\Phi}_n$. It can be shown that the
first term on the right hand side of \eqref{quadapprox} dominates;
that is, the term involving
$$  \lambda_x^2 \ \left ( (\vect{Q}^{2})^T \otimes \vect{\Phi}_y^{-1}
\right ).
$$
Moreover, $\vect{Q}\rightarrow \vect{I}$ and
$\vect{\Phi}^{-1}_y\rightarrow\vect{\Phi}_n^{-1}$. This yields
 \begin{equation}
 \Ical_{\text{adm}}= \vect{I}
\otimes \left ( \frac{\lambda_x^2}{2\pi} \int_{-\pi}^\pi
\vect{\Phi}_n^{-1} \ d\omega \right ).
\end{equation}

\section{Proof of Theorem~\ref{teo:1}} \label{app:teo_1}

For the proof of Theorem~\ref{teo:1}, we require some preliminary
results. Lemma~\ref{lem:1} and Lemma~\ref{lem:2} will be used to
establish the uniqueness part of Theorem~\ref{teo:1}, and
Lemma~\ref{lem:3} is an extension of a standard result in
majorization theory, which is used in the main part of the proof.

\vspace{5pt} %
\begin{lem} \label{lem:1} %
Let $\vect{D} \in \mathbb{R}^{n \times n}$ be a diagonal matrix
with elements $d_{1,1} > \cdots > d_{n,n} > 0$. If $\vect{U} \in
\mathbb{C}^{n \times n}$ is a unitary matrix such that $\vect{U}
\vect{D} \vect{U}^H$ has diagonal $(d_{1,1}, \ldots, d_{n,n})$,
then $\vect{U}$ is of the form $\vect{U} = \diag(u_{1,1}, \ldots,
u_{n,n})$, where $|u_{i,i}| = 1$ for $i = 1, \ldots, n$. This also
implies that $\vect{U} \vect{D} \vect{U}^H = \vect{D}$.
\end{lem}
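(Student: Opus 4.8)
The plan is to exploit the fact that each diagonal entry of $\vect{U}\vect{D}\vect{U}^H$ is a convex combination of the diagonal entries of $\vect{D}$, combine this with the strict ordering $d_{1,1} > \cdots > d_{n,n} > 0$, and then peel off one row and column at a time by induction on $n$.

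First I would write, for each $i$,
\[
(\vect{U}\vect{D}\vect{U}^H)_{i,i} = \sum_{k=1}^n |u_{i,k}|^2 d_{k,k},
\]
and note that, since $\vect{U}$ is unitary, $\sum_{k=1}^n |u_{i,k}|^2 = 1$, so every diagonal entry of $\vect{U}\vect{D}\vect{U}^H$ is a convex combination of the $d_{k,k}$. Applying the hypothesis with $i=1$ gives $d_{1,1} = \sum_k |u_{1,k}|^2 d_{k,k} \le d_{1,1}\sum_k |u_{1,k}|^2 = d_{1,1}$, with equality only if $|u_{1,k}|^2 = 0$ for every $k$ with $d_{k,k} < d_{1,1}$, i.e.\ for every $k \neq 1$ (this is where the strict ordering is needed). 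Hence $u_{1,k} = 0$ for $k \neq 1$ and $|u_{1,1}| = 1$. Orthonormality of the \emph{columns} of $\vect{U}$ then forces the first column to vanish below its first entry as well, so $u_{k,1} = 0$ for $k \neq 1$, and the problem genuinely decouples.

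Next I would run the induction: the claim is trivial for $n = 1$. For the inductive step, the previous paragraph shows $\vect{U} = \diag(u_{1,1}) \oplus \vect{U}'$ with $\vect{U}' \in \mathbb{C}^{(n-1)\times(n-1)}$ unitary, and $\vect{U}'\vect{D}'\vect{U}'^H$ — where $\vect{D}' = \diag(d_{2,2}, \ldots, d_{n,n})$ is again strictly decreasing and positive — has diagonal $(d_{2,2}, \ldots, d_{n,n})$. By the induction hypothesis $\vect{U}'$ is diagonal with unit-modulus entries, hence so is $\vect{U}$. Finally, for any diagonal unitary $\vect{U} = \diag(u_{1,1}, \ldots, u_{n,n})$ one has $\vect{U}\vect{D}\vect{U}^H = \diag(|u_{1,1}|^2 d_{1,1}, \ldots, |u_{n,n}|^2 d_{n,n}) = \vect{D}$, which yields the last assertion. (An alternative packaging of the same idea: the vector $(|u_{i,k}|^2)$ is doubly stochastic, and a doubly stochastic matrix fixing a vector with distinct entries must be the identity, by the same peeling argument.)

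The argument is essentially elementary, so there is no serious obstacle; the one step I would flag as the crux is the deduction $|u_{1,k}|^2 = 0$ for $k \neq 1$, which simultaneously uses that $d_{1,1}$ is the \emph{unique} maximum and that the first row has unit $\ell^2$-norm, followed by the passage from ``first row is a unimodular multiple of $\vect{e}_1$'' to ``first column is a unimodular multiple of $\vect{e}_1$'' via orthonormality of the columns — this is precisely what reduces the statement to the $(n-1)\times(n-1)$ principal submatrix and makes the induction go through.
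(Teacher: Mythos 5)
Your proof is correct and follows essentially the same route as the paper's: both rest on the observation that each diagonal entry of $\vect{U}\vect{D}\vect{U}^H$ is a convex combination of the $d_{k,k}$, use the strict ordering to force the first row onto $\vect{e}_1$, and then use orthonormality to decouple the first column before inducting. The only (cosmetic) difference is that you induct on the dimension $n$ via a block decomposition, whereas the paper inducts on the column index within the fixed $n\times n$ matrix.
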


\begin{proof}
Let $\vect{V} = \vect{U} \vect{D} \vect{U}^H$. The equation for
$(\vect{V})_{i,i}$ is
\begin{align*}
\sum_{k = 1}^n d_{k,k} |u_{i,k}|^2 = d_{i,i}
\end{align*}
from which we have, by the orthonormality of the columns of
$\vect{U}$, that
\begin{align} \label{eq:lem1_1}
\sum_{k = 1}^n \frac{d_{k,k}}{d_{i,i}} |u_{i,k}|^2 = 1 = \sum_{k =
1}^n |u_{i,k}|^2.
\end{align}
We now proceed by induction on $i = 1, \ldots, n$ to show that the
$i$th column of $\vect{U}$ is $[0 \; \cdots \; 0 \; u_{i,i} \; 0
\; \cdots \; 0]^T$ with $|u_{i,i}| = 1$. For $i=1$, it follows
from \eqref{eq:lem1_1} and the fact that $\vect{U}$ is unitary
that
\begin{eqnarray*}
&&|u_{1,1}|^2 + \left| \frac{d_{2,2}}{d_{1,1}} u_{2,1} \right|^2 +
\cdots + \left| \frac{d_{n,n}}{d_{1,1}} u_{n,1} \right|^2 \\
&&\qquad \qquad \qquad \qquad \qquad = |u_{1,1}|^2 + \cdots + |u_{n,1}|^2= 1.
\end{eqnarray*}
However, since $d_{1,1} > \cdots > d_{n,n} > 0$, the only way to
satisfy this equation is to have $|u_{1,1}| = 1$ and $u_{i,1} = 0$
for $i = 2, \ldots, n$. Now, if the assertion holds for $i = 1,
\ldots, k$, the orthogonality of the columns of $\vect{U}$ implies
that $u_{i, k + 1} = 0$ for $i = 1, \ldots, k$, and by following a
similar reasoning as for the case $i = 1$ we deduce that $|u_{k +
1, k + 1}| = 1$ and $u_{i, k + 1} = 0$ for $i = k + 2, \ldots, n$.
\end{proof}

\vspace{5pt} %
\begin{lem} \label{lem:2} %
Let $\vect{D} \in \mathbb{R}^{N \times N}$ be a diagonal matrix
with elements $d_{1,1} > \cdots > d_{N,N} > 0$. If $\vect{U} \in
\mathbb{C}^{N \times n}$, with $n \leq N$, is such that
$\vect{U}^H \vect{U} = I$ and $\vect{V} = \widetilde{\vect{D}}
\vect{U} \widetilde{\vect{D}}^{-1}$ (where $\widetilde{\vect{D}} =
\diag(d_{1,1}, \ldots, d_{n,n})$) also satisfies $\vect{V}^H
\vect{V} = \vect{I}$, then $\vect{U}$ is of the form $\vect{U} =
[\diag(u_{1,1}, \ldots, u_{n,n}) \quad \vect{0}_{N - m,n}]^T$, where
$|u_{i,i}| = 1$ for $i = 1, \ldots, n$.
\end{lem}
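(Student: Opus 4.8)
The plan is to mimic the argument used for Lemma~\ref{lem:1}, now adapted to the rectangular setting. First I would translate the hypothesis on $\vect{V}$ into a scalar identity for each column of $\vect{U}$. Unravelling the definition of $\vect{V}$ and using that $\widetilde{\vect{D}}$ is invertible (since $d_{1,1}>\cdots>d_{N,N}>0$), the condition $\vect{V}^H\vect{V}=\vect{I}$ becomes $\vect{U}^H\vect{D}^2\vect{U}=\widetilde{\vect{D}}^2$; comparing the $(i,i)$ entries for $i=1,\dots,n$ gives
\begin{equation*}
\sum_{k=1}^{N} d_{k,k}^2\,|u_{k,i}|^2 = d_{i,i}^2 , \qquad i=1,\dots,n .
\end{equation*}
On the other hand, $\vect{U}^H\vect{U}=\vect{I}$ yields $\sum_{k=1}^{N}|u_{k,i}|^2=1$ for each such $i$. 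Dividing the first identity by $d_{i,i}^2$ and subtracting the second gives $\sum_{k=1}^{N}\bigl((d_{k,k}/d_{i,i})^2-1\bigr)|u_{k,i}|^2=0$.

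Next I would run an induction on $i=1,\dots,n$ to show that the $i$-th column of $\vect{U}$ equals $u_{i,i}\,\vect{e}_i$ with $|u_{i,i}|=1$, exactly as in Lemma~\ref{lem:1}. For $i=1$ every coefficient $(d_{k,k}/d_{1,1})^2-1$ is nonpositive and vanishes only for $k=1$, so the displayed identity forces $u_{k,1}=0$ for $k\ge 2$ and then $|u_{1,1}|=1$. For the inductive step, the already-established form of columns $1,\dots,j$ together with $\vect{U}^H\vect{U}=\vect{I}$ forces $u_{k,j+1}=0$ for $k=1,\dots,j$, so the $(j+1,j+1)$ identity reduces to $\sum_{k=j+1}^{N}\bigl((d_{k,k}/d_{j+1,j+1})^2-1\bigr)|u_{k,j+1}|^2=0$, whose summands are again nonpositive and vanish only for $k=j+1$; hence $u_{k,j+1}=0$ for $k>j+1$ and $|u_{j+1,j+1}|=1$. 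Stopping the induction at $i=n$ yields precisely $\vect{U}=\begin{bmatrix}\diag(u_{1,1},\dots,u_{n,n})&\vect{0}_{N-n,n}\end{bmatrix}^{T}$ with $|u_{i,i}|=1$ for $i=1,\dots,n$.

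The argument is essentially bookkeeping, so there is no deep obstacle; the one point that requires care is the inductive reduction. Before invoking the strict monotonicity of the $d_{k,k}$ in the $(j+1)$-th equation, I must be sure that orthogonality with the previously pinned-down columns has already eliminated the entries $u_{1,j+1},\dots,u_{j,j+1}$, so that the \emph{remaining} weights $(d_{k,k}/d_{j+1,j+1})^2-1$ (for $k\ge j+1$) are all $\le 0$ and a sum of nonpositive terms equal to zero forces each term to vanish. A secondary point is merely to keep the rectangular dimensions straight — the column index $i$ ranges only up to $n$ while the inner sums run to $N$ — which is exactly what produces the trailing zero block $\vect{0}_{N-n,n}$ in the conclusion.
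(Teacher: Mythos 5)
Your proof is correct and follows essentially the same route as the paper's: both translate the two orthonormality conditions into the column-wise identity $\sum_{k}(d_{k,k}/d_{i,i})^2|u_{k,i}|^2=\sum_k|u_{k,i}|^2$ and run the same induction, using strict monotonicity of the $d_{k,k}$ to force each column of $\vect{U}$ to be $u_{i,i}\vect{e}_i$ with $|u_{i,i}|=1$. You also correctly read the intended (dimension-consistent) meaning of $\vect{V}$ and the trailing zero block $\vect{0}_{N-n,n}$, which the paper's statement garbles slightly with $m$ in place of $n$.
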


\begin{proof}
The idea is similar to the proof of Lemma~\ref{lem:1}. We proceed
by induction on the $i$th column of $\vect{V}$. For the first
column of $\vect{V}$ we have, by the orthonormality of the columns
of $\vect{U}$ and $\vect{V}$, that
\begin{align*}
&|u_{1,1}|^2 + \left| \frac{d_{2,2}}{d_{1,1}} u_{2,1} \right|^2 +
\cdots + \left| \frac{d_{N,N}}{d_{1,1}} u_{N,1} \right|^2 \\
&\qquad \qquad \qquad \qquad \qquad = 1 \\
&\qquad \qquad \qquad \qquad \qquad = |u_{1,1}|^2 + \cdots +
|u_{N,1}|^2.
\end{align*}
Since $d_{1,1} > \cdots > d_{N,N} > 0$, the only way to satisfy
this equation is to have $|u_{1,1}| = 1$ and $u_{i,1} = 0$ for $i
= 2, \ldots, N$. If now the assertion holds for columns $1$ to
$k$, the orthogonality of the columns of $\vect{U}$ implies that
$u_{i, k + 1} = 0$ for $i = 1, \ldots, k$, and by following a
similar reasoning as for the first column of $\vect{U}$ we have that
$|u_{k + 1, k + 1}| = 1$ and $u_{i, k + 1} = 0$ for $i = k + 2,
\ldots, N$.
\end{proof}

\vspace{5pt} %
\begin{lem} \label{lem:3}
Let $\vect{A}, \vect{B} \in \mathbb{C}^{n \times n}$ be Hermitian
matrices. Arrange the eigenvalues $a_1, \ldots, a_n$ of $\vect{A}$
in a descending order, and the eigenvalues $b_1, \ldots, b_n$ of
$\vect{B}$ in an ascending order. Then $\tr ( \vect{A} \vect{B}
)\geq \sum_{i = 1}^n a_i b_i$. Furthermore, if $\vect{B} =
\diag(b_1, \ldots, b_n)$ and both matrices have distinct
eigenvalues, then $\tr ( \vect{A} \vect{B} ) = \sum_{i = 1}^n a_i
b_i$ if and only if $\vect{A} = \diag(a_1, \ldots, a_n)$.
\end{lem}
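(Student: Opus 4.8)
The plan is to reduce the statement to an elementary linear‑programming fact over the Birkhoff polytope of doubly stochastic matrices. First I would diagonalize both matrices, writing $\vect{A}=\vect{U}_A\vect{D}_A\vect{U}_A^H$ and $\vect{B}=\vect{U}_B\vect{D}_B\vect{U}_B^H$ with $\vect{D}_A=\diag(a_1,\ldots,a_n)$ and $\vect{D}_B=\diag(b_1,\ldots,b_n)$, the $a_i$ sorted in descending and the $b_i$ in ascending order. Setting $\vect{W}=\vect{U}_A^H\vect{U}_B$ (a unitary matrix), a short computation using cyclicity of the trace yields $\tr(\vect{A}\vect{B})=\sum_{i,j} a_i b_j\,|(\vect{W})_{i,j}|^2 = \sum_{i,j} a_i b_j (\vect{S})_{i,j}$, where $(\vect{S})_{i,j}=|(\vect{W})_{i,j}|^2$. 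Since the rows and columns of $\vect{W}$ have unit norm, $\vect{S}$ is doubly stochastic.

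Next I would note that $\vect{S}\mapsto \sum_{i,j}a_i b_j (\vect{S})_{i,j}$ is a \emph{linear} functional of $\vect{S}$ on the Birkhoff polytope, whose vertices are the permutation matrices $\vect{P}_\sigma$, with value $\sum_i a_i b_{\sigma(i)}$ at $\vect{P}_\sigma$. By the rearrangement inequality — since $(a_i)$ and $(b_i)$ are sorted in opposite orders — this is minimized at $\sigma=\mathrm{id}$, the minimum being $\sum_i a_i b_i$. As every doubly stochastic $\vect{S}$ is a convex combination of the $\vect{P}_\sigma$, it follows that $\tr(\vect{A}\vect{B})=\sum_{i,j}a_i b_j(\vect{S})_{i,j}\geq \sum_i a_i b_i$, which is the asserted inequality.

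For the equality case, assume $\vect{B}=\diag(b_1,\ldots,b_n)$, so that one may take $\vect{U}_B=\vect{I}$ and hence $\vect{W}=\vect{U}_A^H$, and assume further that the spectra of $\vect{A}$ and of $\vect{B}$ consist of distinct values. Then the strict rearrangement inequality makes $\sigma=\mathrm{id}$ the \emph{unique} minimizing vertex of the above linear functional; since a linear functional on a polytope that attains its minimum at a single vertex attains it nowhere else, the equality $\tr(\vect{A}\vect{B})=\sum_i a_i b_i$ forces $\vect{S}=\vect{P}_{\mathrm{id}}=\vect{I}$. Consequently $|(\vect{U}_A)_{i,j}|^2=\delta_{i,j}$, i.e.\ $\vect{U}_A$ is diagonal with unit‑modulus entries, and therefore $\vect{A}=\vect{U}_A\vect{D}_A\vect{U}_A^H=\vect{D}_A=\diag(a_1,\ldots,a_n)$; the converse is immediate. (Alternatively, once one knows from the polytope argument that the diagonal of $\vect{A}$ equals its eigenvalue vector, the conclusion $\vect{A}=\diag(a_1,\ldots,a_n)$ also follows from Lemma~\ref{lem:1} after shifting $\vect{A}$ by a large multiple of $\vect{I}$ to make its eigenvalues positive.)

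The step I expect to be the main obstacle is the equality part: one must argue not merely that $\vect{S}$ \emph{minimizes} the linear functional but that it equals $\vect{I}$ exactly, which is precisely where distinctness of \emph{both} spectra is needed — through the strict rearrangement inequality together with the uniqueness‑of‑minimizing‑vertex observation. The inequality itself is essentially a classical trace inequality for Hermitian matrices, so the genuinely new content of the lemma, and the delicate point, is this sharp characterization of when equality holds.
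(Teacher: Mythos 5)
Your proof is correct, but it takes a genuinely different route from the paper's. The paper establishes the inequality by citing a standard majorization result (Marshall--Olkin, Thm.~9.H.1.h), and handles the equality case in two steps: for diagonal $\vect{B}$ it writes $\tr(\vect{A}\vect{B})=\sum_i (\vect{A})_{i,i}b_i$, invokes the majorization of the diagonal of a Hermitian matrix by its eigenvalues together with a strict Schur-convexity argument (Thms.~6.A.3 and 3.A.2 of the same reference) to force the diagonal of $\vect{A}$ to equal $(a_1,\ldots,a_n)$, and only then appeals to Lemma~\ref{lem:1} to conclude that $\vect{A}$ is actually diagonal. You instead prove everything from scratch via the identity $\tr(\vect{A}\vect{B})=\sum_{i,j}a_ib_j\,|(\vect{U}_A^H\vect{U}_B)_{i,j}|^2$, the Birkhoff--von Neumann decomposition, and the (strict) rearrangement inequality; your equality analysis goes directly to $\vect{S}=\vect{I}$, which yields $\vect{U}_A$ diagonal and hence $\vect{A}=\vect{D}_A$ in one stroke, bypassing Lemma~\ref{lem:1} entirely. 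The two arguments are of course closely related --- the doubly stochastic matrix $\vect{S}$ is exactly what underlies the majorization facts the paper cites --- but yours is more self-contained, and it quietly sidesteps a small wrinkle in the paper's route: Lemma~\ref{lem:1} as stated requires strictly positive eigenvalues, whereas Lemma~\ref{lem:3} only assumes distinctness (your parenthetical shift by a large multiple of $\vect{I}$ is the standard fix if one does want to reuse Lemma~\ref{lem:1}). The uniqueness-of-minimizing-vertex step you flag as delicate is sound: a linear functional on a polytope that is minimized at a unique vertex attains that minimum nowhere else, and strict monotonicity of both sorted spectra makes the identity permutation the unique minimizer.
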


\begin{proof}
See \cite[Theorem~9.H.1.h]{MarshallO:79} for the proof of the
first assertion. For the second part, notice that if $\vect{B} =
\diag(b_1, \ldots, b_n)$, then by
\cite[Theorem~6.A.3]{MarshallO:79}
\begin{align*}
\tr ( \vect{A} \vect{B} ) = \sum_{i = 1}^n (\vect{A})_{i,i} b_i
\geq \sum_{i = 1}^n (\vect{A})_{[i,i]} b_i
\end{align*}
where $\{(\vect{A})_{[i,i]} \}_{i = 1, \ldots, n}$ denotes the
ordered set $\{ (\vect{A})_{1,1}, \ldots, (\vect{A})_{n,n} \}$
sorted in descending order. Since $\{(\vect{A})_{[i,i]} \}_{i = 1,
\ldots, n}$ is \emph{majorized} by $\{a_1, \ldots, a_n\}$, and the
$b_i$'s are distinct, we can use
\cite[Theorem~3.A.2]{MarshallO:79} to show that
\begin{align*}
\sum_{i = 1}^n (\vect{A})_{[i,i]} b_i > \sum_{i = 1}^n a_i b_i
\end{align*}
unless $(\vect{A})_{[i,i]} = a_i$ for every $i = 1, \ldots, n$.
Therefore, $\tr( \vect{A} \vect{B} ) = \sum_{i = 1}^n a_i b_i$ if
and only if the diagonal of $\vect{A}$ is $(a_1, \ldots, a_n)$.
Now we have to prove that $\vect{A}$ is actually diagonal, but
this follows from Lemma~\ref{lem:1}.
\end{proof}

\vspace{5pt} %
\textbf{\emph{Proof of Theorem~\ref{teo:1}.}} First, we simplify
the expressions in \eqref{eq:teo1_1}. Using the eigendecompositions in \eqref{eq:teo1_3} of $\vect{A}$ and $\vect{B}$,
we see that
\begin{align*}
\vect{P} \vect{A}^{-1} \vect{P}^H  \succeq \vect{B} \quad
&\Leftrightarrow \quad \vect{P} \vect{U}_A \vect{D}_A^{-1} \vect{U}_A^H \vect{P}^H  \succeq \vect{U}_B \vect{D}_B \vect{U}_B^H  \\
&\Leftrightarrow \quad \vect{U}_B^H \vect{P} \vect{U}_A
\vect{D}_A^{-1} \vect{U}_A^H \vect{P}^H \vect{U}_B \succeq
\vect{D}_B.
\end{align*}
Now, define $\bar{\vect{P}} = \vect{U}_B^H \vect{P} \vect{U}_A
\vect{D}_A^{-1/2}$ and observe that
\begin{eqnarray*}
\tr ( \vect{P} \vect{P}^H )%
&=& \tr  \left((\vect{U}_B \bar{\vect{P}} \vect{D}_A^{-H/2} \vect{U}_A^H ) (\vect{U}_B \bar{\vect{P}} \vect{D}_A^{-H/2} \vect{U}_A^H)^H \right) \\ %
&=&\tr  (\vect{U}_B \bar{\vect{P}} \vect{D}_A^{-1} \bar{\vect{P}}^H  \vect{U}_B^H )= \tr (\bar{\vect{P}}^H \bar{\vect{P}} \vect{D}_A^{-1}).
\end{eqnarray*}
Therefore, \eqref{eq:teo1_1} is equivalent to
\begin{align} \label{eq:teo_1_4}
\begin{array}{cl}
\minimize{\vect{P} \in \mathbb{C}^{n \times N}} & \tr (\bar{\vect{P}}^H \bar{\vect{P}} \vect{D}_A^{-1}) \\ %
\text{s.t.}                                         & \bar{\vect{P}} \bar{\vect{P}}^H \succeq \vect{D}_B. %
\end{array}
\end{align}
To further simplify our problem, consider the singular value
decomposition $\bar{\vect{P}} = \vect{U} \vect{\Sigma}
\vect{V}^H$, where $\vect{U} \in \mathbb{C}^{n \times n}$ and
$\vect{V} \in \mathbb{C}^{N \times N}$ are unitary matrices and
$\vect{\Sigma}$ has the structure
\begin{align*}
\vect{\Sigma} =\! \left[ \!\!\begin{array}{cccccc}
\sigma_1 &        & 0        & 0      & \cdots & 0      \\
         & \ddots &          & \vdots &        & \vdots \\
0        &        & \sigma_m & 0      & \cdots & 0
\end{array} \!\! \right] \textrm{or}\,\,
\vect{\Sigma} =\! \left[ \!\!\begin{array}{ccc}
\sigma_1 &        & 0          \\
         & \ddots &           \\
0        &        & \sigma_m  \\
0      & \cdots & 0 \\
\vdots &        & \vdots\\
0      & \cdots & 0
\end{array} \!\! \right]
\end{align*}
depending on whether $N \geq n$ or $N<n$. The singular values are
ordered such that $\sigma_1 \geq \cdots \geq \sigma_m > 0$. Now,
observe that \eqref{eq:teo_1_4} is equivalent to
\begin{align} \label{eq:teo_1_5}
\begin{array}{cl}
\minimize{\vect{P} \in \mathbb{C}^{n \times N}} & \tr ( \vect{V}^H \vect{\Sigma}^H \vect{\Sigma} \vect{V}^H \vect{D}_A^{-1}) \\ %
\text{s.t.}                                         & \vect{U} \vect{\Sigma}  \vect{\Sigma}^H \vect{U}^H \succeq \vect{D}_B. %
\end{array}
\end{align}
With this problem formulation, it follows (from Sylvester's law
of inertia \cite{Ostrowski:60}) that we need $m \geq
\rank(\vect{D}_B)$ to achieve feasibility in the constraint (i.e.,
having at least as many non-zero singular values of
$\vect{\Sigma}$ as non-zero eigenvalues in $\vect{D}_B$). This
corresponds to the condition $N \geq \rank(\vect{B})$ in the
theorem.

Now we will show that $\vect{U}$ and $\vect{V}$ can be taken to be
the identity matrices. Using Lemma \ref{lem:3}, the cost function
can be lower bounded as
\begin{equation} \label{eq:teo_1_6}
\begin{split} \tr ( \vect{V} \vect{\Sigma}^H \vect{\Sigma} \vect{V}^H
\vect{D}_A^{-1}) &\geq \sum_{j=1}^{n} \lambda_{n-j+1}(\vect{D}_A)
\lambda_{j}(\vect{V} \vect{\Sigma}^H \vect{\Sigma} \vect{V}^H) \\
&= \sum_{j=1}^{m} (\vect{D}_A)_{jj} \sigma_j^2
\end{split}
\end{equation}
where $\lambda_j(\cdot)$ denotes the $j$th largest eigenvalue. The
equality is achieved if $\vect{V}=\vect{I}$, and observe that we
can select $\vect{V}$ in this manner without affecting the
constraint.

To show that $\vect{U}$ can also be taken as the identity matrix,
notice that the cost function in \eqref{eq:teo_1_5} does not
depend on $\vect{U}$, while the constraint implies (by looking at
the diagonal elements of the inequality and recalling that
$\vect{U}$ is unitary) that
\begin{align} \label{eq:teo_1_9}
\sigma_i^2 \geq (\vect{D}_B)_{i, i}, \quad \quad i = 1, \ldots, m,
\end{align}
requiring $m \geq \rank(\vect{D}_B)$. Suppose that
$\bar{\vect{U}}$ and $\bar{\vect{\Sigma}}$ minimize the cost.
Then, we can replace $\bar{\vect{U}}$ by $\vect{I}$ and
 satisfy the constraint, without affecting the cost in
 \eqref{eq:teo_1_6}. This means that there exists an optimal solution with $\vect{U} = \vect{I}$.

With $\vect{U} = \vect{I}$ and $\vect{V} = \vect{I}$, the problem
\eqref{eq:teo_1_5} is equivalent (in terms of $\vect{\Sigma}$) to
\begin{align*}
\begin{array}{cl}
\minimize{\sigma_1\geq0,\ldots,\sigma_m \geq 0} & \sum_{i = 1}^m \sigma_i^2 (\vect{D}_A)_{i, i} \\ %
\text{s.t.}                                         & \sigma_i^2 \geq (\vect{D}_B)_{i, i}, \quad i = 1, \ldots, m. %
\end{array}
\end{align*}
It is easy to see that the optimal solution for this problem is
$
\sigma_i^{\textrm{opt}} = \sqrt{(\vect{D}_B)_{i, i}},
i = 1, \ldots, m.
$
By creating an optimal $\vect{\Sigma}$, denoted as
$\vect{\Sigma}^{\textrm{opt}}$, with the singular values
$\sigma_1^{\textrm{opt}},\ldots,\sigma_m^{\textrm{opt}}$, we
achieve an optimal solution
\begin{eqnarray*}
\vect{P}^{\textrm{opt}}
= \vect{U}_B \bar{\vect{P}} \vect{D}_A^{1/2}\vect{U}_A^H
= \vect{U}_B \vect{\Sigma}^{\textrm{opt}} \vect{D}_A^{1/2}\vect{U}_A^H
 = \vect{U}_B
\vect{D}_P \vect{U}_A^H\nonumber
\end{eqnarray*}
with $\vect{D}_P$ as stated in the theorem.

Finally, we will show how to characterize all optimal solutions
for the case when $\vect{A}$ and $\vect{B}$ have distinct non-zero
eigenvalues (thus, $m=n$). The optimal solutions need to give
equality in \eqref{eq:teo_1_6} and thus Lemma~\ref{lem:3} gives
that $\vect{V} \vect{\Sigma} \vect{\Sigma}^H \vect{V}^H$ is
diagonal and equal to $\vect{\Sigma} \vect{\Sigma}^H$.
Lemma~\ref{lem:1} then implies that $\vect{V} = \diag(v_{1, 1},
\ldots, v_{n, n})$ with $|v_{i, i}| = 1$ for $i = 1, \ldots, n$.

For the optimal $\vect{\Sigma}$, we have that $\sigma_i^2 =
(\vect{D}_B)_{i, i}$ for $i = 1, \ldots, n$, so the diagonal
elements of $\vect{U} \vect{\Sigma}  \vect{\Sigma}^H \vect{U}^H -
\vect{D}_B$ are zero. Since $\vect{U} \vect{\Sigma}
\vect{\Sigma}^H \vect{U}^H - \vect{D}_B \succeq 0$ for every
feasible solution of \eqref{eq:teo_1_5}, $\vect{U}$ has to satisfy
$\vect{U} \vect{\Sigma}  \vect{\Sigma}^H \vect{U}^H = \vect{D}_B$.
Lemma~\ref{lem:2} then establishes that the first $n$ columns of
$\vect{U}$ are of the form $[\diag(u_{1, 1}, \ldots, u_{n, n})
\quad \vect{0}_{N - m, n}]^T$, where $|u_{i,i}| = 1$ for $i = 1,
\ldots, n$. Since $\vect{U}$ has to be unitary, and its last $N -
n + 1$ columns play no role in $\bar{\vect{P}}$ (due to the form
of $\vect{\Sigma}$), we can take them as $[\vect{0}_{n, N - m + 1}
\quad \vect{I}_{N - m + 1}]^T$ without loss of generality.

Summarizing, an optimal solution is given by \eqref{eq:teo1_3}.
When $\vect{A}$ and $\vect{B}$ have distinct eigenvalues,
$\vect{V}$ and $\vect{U}$ can only multiply the columns of
$\vect{U}_A$ and $\vect{U}_B$, respectively, by complex scalars of
unit magnitude.

\section{Proof of Theorem~\ref{teo:2} and Theorem~\ref{teo:3}} \label{app:cor_2}

Before proving Theorem~\ref{teo:2} and \ref{teo:3}, a lemma will
be given that characterizes equivalences between different sets of
feasible training matrices $\vect{P}$.

\begin{lem} \label{lem:4}
Let $\vect{B} \in \mathbb{C}^{n \times n}$ and $\vect{C} \in
\mathbb{C}^{m \times m}$ be Hermitian matrices, and $f:
\mathbb{C}^{n \times N} \rightarrow \mathbb{C}^{n \times n}$ be
such that $f(\vect{P})=f(\vect{P})^H$. Then, the following sets are equivalent
\begin{equation}\{\vect{P} | f(\vect{P}) \otimes \vect{I}
\succeq \vect{B} \otimes \vect{C}\} = \{\vect{P} | f(\vect{P})
\succeq \lambda_{\textrm{max}}(\vect{C}) \vect{B} \}.
\end{equation}
\end{lem}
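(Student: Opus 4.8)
The plan is to reduce the Kronecker‑structured semidefinite inequality to an ordinary Loewner inequality by using only the spectrum of $\vect{C}$. Fix an arbitrary $\vect{P}$ and abbreviate $\vect{X}\triangleq f(\vect{P})$, which is a Hermitian $n\times n$ matrix; since $\vect{P}$ is arbitrary, it suffices to show that the conditions $\vect{X}\otimes\vect{I}\succeq\vect{B}\otimes\vect{C}$ and $\vect{X}\succeq\lambda_{\textrm{max}}(\vect{C})\vect{B}$ are equivalent (here $\vect{I}=\vect{I}_m$). First I would diagonalize $\vect{C}=\vect{W}\vect{\Lambda}\vect{W}^H$ with $\vect{W}$ unitary and $\vect{\Lambda}=\diag(\mu_1,\ldots,\mu_m)$, where $\mu_1=\lambda_{\textrm{max}}(\vect{C})$. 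Conjugating the inequality by the unitary matrix $\vect{I}_n\otimes\vect{W}$ (which preserves the Loewner order) and using $(\vect{A}_1\otimes\vect{A}_2)(\vect{A}_3\otimes\vect{A}_4)=(\vect{A}_1\vect{A}_3)\otimes(\vect{A}_2\vect{A}_4)$ turns $\vect{X}\otimes\vect{I}\succeq\vect{B}\otimes\vect{C}$ into the equivalent statement $\vect{X}\otimes\vect{I}\succeq\vect{B}\otimes\vect{\Lambda}$.

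The next step is to ``block‑diagonalize'' this inequality. Writing an arbitrary vector of $\mathbb{C}^{nm}$ as $\sum_{i=1}^m\vect{v}_i\otimes\vect{e}_i$, with $\vect{v}_i\in\mathbb{C}^n$ and $\{\vect{e}_i\}$ the canonical basis of $\mathbb{C}^m$, orthonormality of the $\vect{e}_i$ together with $\vect{\Lambda}\vect{e}_i=\mu_i\vect{e}_i$ gives
\begin{equation*}
\left(\sum_i\vect{v}_i\otimes\vect{e}_i\right)^{\!H}\left(\vect{X}\otimes\vect{I}-\vect{B}\otimes\vect{\Lambda}\right)\left(\sum_j\vect{v}_j\otimes\vect{e}_j\right)=\sum_{i=1}^m\vect{v}_i^H\left(\vect{X}-\mu_i\vect{B}\right)\vect{v}_i .
\end{equation*}
Hence $\vect{X}\otimes\vect{I}\succeq\vect{B}\otimes\vect{\Lambda}$ if and only if $\vect{X}\succeq\mu_i\vect{B}$ for every $i=1,\ldots,m$: the ``if'' direction holds because every summand on the right is then nonnegative, and the ``only if'' direction follows by setting all $\vect{v}_j$ but one equal to zero.

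Finally I would collapse the family $\{\vect{X}\succeq\mu_i\vect{B}\}_{i=1}^m$ to the single inequality $\vect{X}\succeq\mu_1\vect{B}=\lambda_{\textrm{max}}(\vect{C})\vect{B}$. The converse is trivial (take $i=1$); for the forward implication, $\mu_1\geq\mu_i$ gives $(\mu_1-\mu_i)\vect{B}\succeq\vect{0}$, so $\vect{X}\succeq\mu_1\vect{B}\succeq\mu_i\vect{B}$. This last step is where positive semidefiniteness of $\vect{B}$ enters — a property that holds in every application of the lemma in the paper, where $\vect{B}$ is a nonnegative multiple of $\Ical_T^T$ — and it is really the only point that is not pure Kronecker bookkeeping. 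There is no substantive obstacle here; the only place calling for a little care is the block‑diagonalization identity displayed above. Chaining the three equivalences then yields the claimed set equality.
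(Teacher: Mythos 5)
Your proof is correct, and it takes a somewhat more structural route than the paper's. The paper argues the two inclusions separately: $\mathrm{RHS}\subseteq\mathrm{LHS}$ follows from the Kronecker monotonicity chain $f(\vect{P})\otimes\vect{I}\succeq\lambda_{\textrm{max}}(\vect{C})\vect{B}\otimes\vect{I}=\vect{B}\otimes\lambda_{\textrm{max}}(\vect{C})\vect{I}\succeq\vect{B}\otimes\vect{C}$, and $\mathrm{LHS}\subseteq\mathrm{RHS}$ is proved by contradiction, testing against a single vector $\vect{x}\otimes\vect{v}$ with $\vect{v}$ a top eigenvector of $\vect{C}$ --- which is exactly the $i=1$ instance of your block identity. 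Your version instead diagonalizes $\vect{C}$ outright and shows the Kronecker constraint is equivalent to the full family $\vect{X}\succeq\mu_i\vect{B}$, $i=1,\ldots,m$, before collapsing it to the top eigenvalue; this buys a complete characterization as an intermediate product and makes the logical structure a clean chain of equivalences rather than two asymmetric inclusions. One point you handle more carefully than the lemma's own statement: the collapse step $\vect{X}\succeq\mu_1\vect{B}\Rightarrow\vect{X}\succeq\mu_i\vect{B}$ genuinely requires $\vect{B}\succeq\vect{0}$, whereas the lemma only assumes $\vect{B}$ Hermitian. The paper's proof has the same hidden dependence (its step $\vect{B}\otimes\lambda_{\textrm{max}}(\vect{C})\vect{I}\succeq\vect{B}\otimes\vect{C}$ fails for indefinite $\vect{B}$; a scalar counterexample is $\vect{B}=-1$, $\vect{C}=\diag(1,-1)$), so your explicit flag that positive semidefiniteness of $\vect{B}$ is needed --- and is satisfied in every invocation of the lemma, where $\vect{B}$ is a nonnegative multiple of a positive semidefinite matrix --- is a genuine improvement in precision rather than a gap.
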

\begin{proof}
The equivalence will be proved by showing that the left hand side
(LHS) is a subset of right hand side (RHS), and \textit{vice
versa}. First, assume that $f(\vect{P}) \succeq
\lambda_{\textrm{max}}(\vect{C}) \vect{B}$, then
\begin{equation}
\begin{split}
f(\vect{P}) \otimes \vect{I} &\succeq
\lambda_{\textrm{max}}(\vect{C}) \vect{B} \otimes \vect{I}
\\ & =
 (\vect{B} \otimes \lambda_{\textrm{max}}(\vect{C}) \vect{I}) \succeq
 (\vect{B} \otimes \vect{C}).
 \end{split}
\end{equation}
Hence, $\textrm{RHS} \subseteq \textrm{LHS}$.

Next, assume that $f(\vect{P}) \otimes \vect{I} \succeq \vect{B}
\otimes \vect{C}$, but for the purpose of contradiction that
$f(\vect{P}) \not \succeq \lambda_{\textrm{max}}(\vect{C})
\vect{B}$. Then, there exists a vector $\vect{x}$ such that
$\vect{x}^H (f(\vect{P}) - \lambda_{\textrm{max}}(\vect{C})
\vect{B}) \vect{x} < 0$. Let $\vect{v}$ be an eigenvector of
$\vect{C}$ that corresponds to $\lambda_{\textrm{max}}(\vect{C})$
and define $\vect{y} = \vect{x} \otimes \vect{v}$. Then
\begin{equation}
\begin{split}
\vect{y} (f(\vect{P})& \otimes \vect{I} - \vect{B} \otimes
\vect{C}) \vect{y} \\ &= (\vect{x}^H f(\vect{P}) \vect{x}) \|
\vect{v} \|^2 -
 (\vect{x}^H \vect{B}\vect{x}) (\vect{v}^H \vect{C} \vect{v}) \\ &= \vect{x}^H (f(\vect{P}) - \lambda_{\textrm{max}}(\vect{C}) \vect{B} )\vect{x} \|
\vect{v} \|^2 < 0
\end{split}
\end{equation}
which is a contradiction. Hence, $\textrm{LHS} \subseteq
\textrm{RHS}$.
\end{proof}

\textbf{\emph{Proof of Theorem~\ref{teo:2}.}} Rewrite the
constraint as
\begin{equation}
\begin{split}
&\vect{\Ptilde}^H (\vect{S}_{Q}^T \otimes \vect{S}_{R})^{-1}
\vect{\Ptilde}  \succeq c\Ical_T^T \otimes \Ical_R \\
 \Leftrightarrow \quad & (\vect{P} \vect{S}_{Q}^{-1} \vect{P}^H)^T
\otimes \vect{S}_{R}^{-1} \succeq c\Ical_T^T \otimes \Ical_R \\
 \Leftrightarrow \quad & (\vect{P} \vect{S}_{Q}^{-1} \vect{P}^H)
\otimes \vect{I} \succeq c\Ical_T \otimes \vect{S}_{R} \Ical_R.
\end{split}
\end{equation}
Let $f(\vect{P}) = \vect{P} \vect{S}_{Q}^{-1} \vect{P}^H$. Then
Lemma \ref{lem:4} gives that the set of feasible $\vect{P}$ is
equivalent to the set of feasible $\vect{P}$ with the constraint
\begin{equation}
(\vect{P} \vect{S}_{Q}^{-1} \vect{P}^H) \succeq c
\lambda_{\textrm{max}}(\vect{S}_{R} \Ical_R) \Ical_T .
\end{equation}

\textbf{\emph{Proof of Theorem~\ref{teo:3}.}}

In the case that $\vect{R}_R=\vect{S}_R$, the constraint can be
rewritten as
\begin{equation}
(\vect{P} \vect{S}_{Q}^{-1} \vect{P}^H + \vect{R}_{T}^{-1})^T
\otimes \vect{I}   \succeq c \Ical_T^T \otimes \vect{S}_{R}
\Ical_R.
\end{equation}
With $f(\vect{P}) = \vect{P} \vect{S}_{Q}^{-1} \vect{P}^H +
\vect{R}_{T}^{-1}$, Lemma \ref{lem:4} can be applied to achieve
the equivalent constraint
\begin{equation}
\begin{split}
&\vect{P} \vect{S}_{Q}^{-1} \vect{P}^H + \vect{R}_{T}^{-1}
 \succeq c\lambda_{\textrm{max}}(\vect{S}_{R} \Ical_R)
 \Ical_T \\
 \Leftrightarrow &\quad \vect{P} \vect{S}_{Q}^{-1} \vect{P}^H
 \succeq c\lambda_{\textrm{max}}(\vect{S}_{R} \Ical_R)
 \Ical_T-\vect{R}_{T}^{-1} \\
 \Leftrightarrow &\quad \vect{P} \vect{S}_{Q}^{-1} \vect{P}^H
 \succeq [c\lambda_{\textrm{max}}(\vect{S}_{R} \Ical_R)
 \Ical_T-\vect{R}_{T}^{-1}]_+
\end{split}
\end{equation}
where the last equality follows from the fact that the left hand side is
positive semi-definite.

In the case that $\vect{R}_R^{-1}=\Ical_R$, the constraint can be
rewritten as
\begin{equation}
\begin{split}
&(\vect{P} \vect{S}_{Q}^{-1} \vect{P}^H)^T \otimes
\vect{S}_{R}^{-1}
\succeq (c\Ical_T-\vect{R}_{T})^T \otimes \Ical_R \\
\Leftrightarrow &\quad (\vect{P} \vect{S}_{Q}^{-1} \vect{P}^H)^T
\otimes \vect{S}_{R}^{-1} \succeq [c\Ical_T-\vect{R}_{T}]_+^T
\otimes
 \Ical_R.
\end{split}
\end{equation}
Observe that this expression is identical to the constraint in
(\ref{eq:teo2_1}), except that the positive semi-definite
$\Ical_T$ has been replaced by $ [c\Ical_T-\vect{R}_{T}]_+$. Thus,
the equivalence follows directly from Theorem \ref{teo:2}.

In the case $\vect{R}_T^{-1}=\Ical_T$, the constraint can be
rewritten as
\begin{equation}
\begin{split}
&(\vect{P} \vect{S}_{Q}^{-1} \vect{P}^H)^T \otimes
\vect{S}_{R}^{-1} \succeq
\Ical_T^T \otimes (c\Ical_R-\vect{R}_{R}) \\
\Leftrightarrow &\quad (\vect{P} \vect{S}_{Q}^{-1} \vect{P}^H)^T
\otimes \vect{S}_{R}^{-1} \succeq \Ical_T^T \otimes
[c\Ical_R-\vect{R}_{R}]_+.
\end{split}
\end{equation}
As in the previous case, the equivalence follows directly from
Theorem \ref{teo:2}. 

\section{Proof of Theorem~\ref{teo:4}} \label{app:cor_3}

Our basic assumption is that $\Ical_T,\Ical_R$ are both Hermitian matrices, which
is encountered in the applications presented in this paper. Denoting by $\vect{P}'$ the matrix $\vect{P}^{T}$
and using the fact that\footnote{For a Hermitian positive semidefinite matrix $\vect{A}$, we consider here that $\vect{A}^{1/2}$ is the matrix with the same eigenvectors as $\vect{A}$ and eigenvalues the square roots of the corresponding eigenvalues of $\vect{A}$. With this definition of the square root of a Hermitian positive semidefinite matrix, it is clear that $\vect{A}^{1/2}=\vect{A}^{H/2}$, leading to $\vect{A}=\vect{A}^{1/2}\vect{A}^{H/2}=\vect{A}^{H/2}\vect{A}^{1/2}$.} $\Ical_{\text{adm}}=\left(\Ical_T' \otimes \Ical_R\right)^{1/2}\left(\Ical_T' \otimes \Ical_R\right)^{1/2}$,
it can be seen that our optimization problem takes the following form
\begin{align} \label{eq:teo4_2}
\begin{array}{cl}
\minimize{\vect{P}' \in \mathbb{C}^{B \times n_T}} & J(\vect{H}) \\ %
\text{s.t.}                                 &{\mathrm{tr}(\vect{P}'\vect{P'}^{H})}\leq \mathcal{P}  %
\end{array}
\end{align}
where $J(\vect{H})=\E_{\widetilde{\vect{H}}}\left\{J(\widetilde{\vect{H}}, \vect{H})\right\}$ is given by the expression
\begin{eqnarray*}
&&\mathrm{tr}\left\{\left[\Ical_T'^{-1/2}\vect{P'}^{H}\vect{S'}_Q^{-1}\vect{P'}\Ical_T'^{-1/2}\otimes \Ical_R^{-1/2}\vect{S}_R^{-1}\Ical_R^{-1/2}\right]^{-1}\right\}\\
&&=\mathrm{tr}\left\{\left[\Ical_T'^{-1/2}\vect{P'}^{H}\vect{S'}_Q^{-1}\vect{P'}\Ical_T'^{-1/2}\right]^{-1}\otimes \Ical_R^{1/2}\vect{S}_R\Ical_R^{1/2}\right\}.
\end{eqnarray*}
Using the fact that $\mathrm{tr}\left(\vect{A}\otimes\vect{B}\right)=\mathrm{tr}\left(\vect{A}\right)\mathrm{tr}\left(\vect{B}\right)$ for square matrices $\vect{A}$ and $\vect{B}$, it is clear from the last expression that the optimal training matrix can be found by minimizing
\begin{eqnarray}
\mathrm{tr}\left\{\left[\vect{V}_T^H\Ical_T'^{-1/2}\vect{P'}^{H}\vect{S'}_Q^{-1}\vect{P'}\Ical_T'^{-1/2}\vect{V}_T\right]^{-1}\right\},
\end{eqnarray}
where $\vect{V}_T$ denotes the modal matrix of $\Ical_T'$ corresponding to an arbitrary ordering of its eigenvalues. Here, we have used the invariance of the trace operator under unitary transformations.
First, note that for an arbitrary Hermitian positive definite matrix $\vect{A}$, $\mathrm{tr}\left(\vect{A}^{-1}\right)=\sum_i 1/\lambda_i\left(\vect{A}\right)$, where $\lambda_i\left(\vect{A}\right)$ is the $i$th eigenvalue of $\vect{A}$. Since the function $1/x$ is strictly convex
for $x>0$, $\mathrm{tr}\left(\vect{A}^{-1}\right)$ is a Schur-convex function with respect to the eigenvalues of $\vect{A}$~\cite{MarshallO:79}. Additionally, for any Hermitian matrix $\vect{A}$, the vector of its diagonal entries is majorized by the vector of its eigenvalues~\cite{MarshallO:79}. Combining the last two results, it follows that $\mathrm{tr}\left(\vect{A}^{-1}\right)$ is minimized when $\vect{A}$ is diagonal. Therefore, we may choose the modal matrices of $\vect{P}'$ in such a way
that $\vect{V}_T^H\Ical_T'^{-1/2}\vect{P'}^{H}\vect{S'}_Q^{-1}\vect{P'}\Ical_T'^{-1/2}\vect{V}_T$ is diagonalized. Suppose that the
singular value decomposition (SVD) of $\vect{P'}^{H}$ is $\vect{U}\vect{D}_{P'}\vect{V}^{H}$ and that the modal
matrix of $\vect{S}_{Q}'$, corresponding to arbitrary ordering of its eigenvalues, is $\vect{V}_{Q}$. Setting $\vect{U}=\vect{V}_T$ and $\vect{V}=\vect{V}_{Q}$, $\vect{V}_T^H\Ical_T'^{-1/2}\vect{P'}^{H}\vect{S'}_Q^{-1}\vect{P'}\Ical_T'^{-1/2}\vect{V}_T$ is diagonalized and is given by the expression
\[
\vect{\Lambda}_T^{-1/2}\vect{D}_{P'}\vect{\Lambda}_Q^{-1}\vect{D}_{P'}\vect{\Lambda}_T^{-1/2}.
\]
Here, $\vect{\Lambda}_T$ and $\vect{\Lambda}_Q$ are the diagonal eigenvalue matrices containing the eigenvalues of $\Ical_T'$ and $\vect{S'}_Q$, respectively, in their
main diagonals. The ordering of the eigenvalues corresponds to $\vect{V}_T$ and
$\vect{V}_{Q}$. Clearly, by reordering the columns of $\vect{V}_T$ and
$\vect{V}_{Q}$, we can reorder the eigenvalues in $\vect{\Lambda}_T$ and $\vect{\Lambda}_Q$. Assume that there are two
different permutations $\pi, \varpi$ such that $\pi\left((\vect{\Lambda}_T)_{1,1}\right),\ldots,\pi\left((\vect{\Lambda}_T)_{n_T,n_T}\right)$ and $\varpi\left((\vect{\Lambda}_Q)_{1,1}\right),\ldots,\varpi\left((\vect{\Lambda}_Q)_{B,B}\right)$ minimize $J(\vect{H})$ subject to our training energy constraint. Then,
 the entries of the corresponding eigenvalue matrix of $\vect{V}_T^H\Ical_T'^{-1/2}\vect{P'}^{H}\vect{S'}_Q^{-1}\vect{P'}\Ical_T'^{-1/2}\vect{V}_T$ are
 $(\vect{D}_{P'})_{i,i}^{2}/\left(\pi\left((\vect{\Lambda}_T)_{i,i}\right)\varpi\left((\vect{\Lambda}_Q)_{i,i}\right)\right),i=1,2,\ldots, n_T$ ($B\geq n_T$).
 Setting $(\vect{D}_{P'})_{i,i}^{2}=\kappa_{i}, i=1,2,\ldots,n_T$, the optimization problem (\ref{eq:teo4_2}) results in
 \begin{align} \label{eq:teo4_3}
\begin{array}{cl}
\minimize{\pi,\varpi,\kappa_{i},i=1,2,\ldots, n_T} & \sum_{i=1}^{n_T}\frac{1}{\frac{\kappa_{i}}{\pi\left((\vect{\Lambda}_T)_{i,i}\right)\varpi\left((\vect{\Lambda}_Q)_{i,i}\right)}} \\ %
\text{s.t.}                                 &\sum_{i=1}^{n_T}\kappa_{i}\leq \mathcal{P}  %
\end{array}
\end{align}
which leads to
 \begin{align} \label{eq:teo4_4}
\begin{array}{cl}
\minimize{\pi,\varpi,\kappa_{i},i=1,2,\ldots, n_T} & \sum_{i=1}^{n_T}\frac{\alpha_i}{\kappa_i} \\ %
\text{s.t.}                                 &\sum_{i=1}^{n_T}\kappa_{i}\leq \mathcal{P}  %
\end{array}
\end{align}
where $\alpha_i=\pi\left((\vect{\Lambda}_T)_{i,i}\right)\varpi\left((\vect{\Lambda}_Q)_{i,i}\right),i=1,2,\ldots, n_T$. Forming the Lagrangian
of the last problem, it can be seen that
\[
(\vect{D}_{P'})_{i,i}=\sqrt{\frac{\mathcal{P}\sqrt{\alpha_i}}{\sum_{j=1}^{n_T}\sqrt{\alpha_j}}}, i=1,2,\ldots, n_T
\]
while the objective value equals to $\left(\sum_{i=1}^{n_T}\sqrt{\alpha_i}\right)^{2}/\mathcal{P}$. Using Lemma~\ref{lem:3}, it can be seen
that $\pi$ and $\varpi$ should correspond to opposite orderings of $(\vect{\Lambda}_T)_{i,i},(\vect{\Lambda}_Q)_{j,j}, i=1,2,\ldots, n_T, j=1,2,\ldots, B$, respectively. Since $B$ can be greater than $n_T$, the eigenvalues of $\Ical_T'$ must be set in decreasing order and those of $\vect{S'}_Q$ in increasing order.

\section{Proof of Theorem~\ref{teo:5}} \label{app:cor_4}

Using the factorization $\Ical_{\text{adm}}=\left(\Ical_T' \otimes \Ical_R\right)^{1/2}\left(\Ical_T' \otimes \Ical_R\right)^{1/2}$, we can see that $E\left\{J(\vect{\tilde{H}},\vect{H})\right\}$ is given by the expression
\begin{eqnarray}\label{eq:detMMSE-1}
&\mathrm{tr}&\left\{\left[\left(\Ical_T'^{-1/2}\vect{R}_T'^{-1}\Ical_T'^{-1/2}\otimes \Ical_R^{-1/2}\vect{R}_R^{-1}\Ical_R^{-1/2}\right)\right.\right.\nonumber\\&+&\left.\left.\left(\Ical_T'^{-1/2}\vect{P'}^{H}\vect{S'}_Q^{-1}\vect{P'}\Ical_T'^{-1/2}\otimes \Ical_R^{-1/2}\vect{S}_R^{-1}\Ical_R^{-1/2}\right)\right]^{-1}\right\},\nonumber\\
\end{eqnarray}
where $\vect{R}_T'=\vect{R}_T^T$ with eigenvalue decomposition $\vect{U}_T'\vect{\Lambda}_T'\vect{U}_T'^H$. This objective function subject to the training energy constraint ${\mathrm{tr}(\vect{P}'\vect{P'}^{H})}\leq \mathcal{P}$ seems very difficult to minimize analytically unless special assumptions are made.
\begin{itemize}
  \item $\vect{R}_R=\vect{S_R}$: Then, (\ref{eq:detMMSE-1}) becomes
  \begin{eqnarray}\label{eq:detMMSE-2}
&\mathrm{tr}&\left\{\left(\Ical_T'^{-1/2}\vect{R}_T'^{-1}\Ical_T'^{-1/2}+\Ical_T'^{-1/2}\vect{P'}^{H}\vect{S'}_Q^{-1}\vect{P'}\Ical_T'^{-1/2}\right)^{-1}\right.\nonumber\\ &\otimes& \left.\Ical_R^{1/2}\vect{R}_R\Ical_R^{1/2}\right\}.
\end{eqnarray}
Using once more the fact that $\mathrm{tr}\left(\vect{A}\otimes\vect{B}\right)=\mathrm{tr}\left(\vect{A}\right)\mathrm{tr}\left(\vect{B}\right)$ for square matrices $\vect{A}$ and $\vect{B}$, it is clear from (\ref{eq:detMMSE-2}) that the optimal training matrix can be found by minimizing
\begin{eqnarray}\label{eq:detMMSE-3}
\mathrm{tr}\left\{\left(\vect{R}_T'^{-1}+\vect{P'}^{H}\vect{S'}_Q^{-1}\vect{P'}\right)^{-1}\Ical_T'\right\}.
\end{eqnarray}
Again, here some special assumptions may be of interest.
\begin{itemize}
\item $\Ical_T=\vect{I}$: Then the optimal training matrix can be found by straightforward adjustment of Proposition 2 in \cite{KatselisKT:07}.
\item $\vect{R}_T^{-1}=\Ical_T$: Then (\ref{eq:detMMSE-3}) takes the form
\begin{eqnarray}\label{eq:detMMSE-4}
\mathrm{tr}\left\{\left(\vect{I}+\vect{R}_T'^{1/2}\vect{P'}^{H}\vect{S'}_Q^{-1}\vect{P'}\vect{R}_T'^{1/2}\right)^{-1}\right\}.
\end{eqnarray}

Using the same majorization argument as in the previous Appendix for $\mathrm{tr}\left(\vect{A}^{-1}\right)=\sum_i 1/\lambda_i\left(\vect{A}\right)$, and adopting the notation therein, we should select $\vect{U}=\vect{U}_T'$ and $\vect{V}=\vect{V}_Q$. With these choices, the optimal power allocation problem becomes
\begin{align} \label{eq:detMMSE-5}
\begin{array}{cl}
\minimize{\pi,\varpi,\kappa_{i},i=1,2,\ldots, n_T} & \sum_{i=1}^{n_T}\frac{1}{1+\frac{\pi\left((\vect{\Lambda}_T')_{i,i}\right)\kappa_{i}}{\varpi\left((\vect{\Lambda}_Q)_{i,i}\right)}} \\ %
\text{s.t.}                                 &\sum_{i=1}^{n_T}\kappa_{i}\leq \mathcal{P}  %
\end{array}
\end{align}
where $(\vect{\Lambda}_T')_{i,i}, i=1,2,\ldots,n_T$ are the eigenvalues of $\vect{R}_T'$. Fixing the permutations $\pi(\cdot)$ and $\varpi(\cdot)$, we set
$\gamma_i=\pi\left((\vect{\Lambda}_T')_{i,i}\right)/\varpi\left((\vect{\Lambda}_Q)_{i,i}\right), i=1,2,\ldots,n_T$. With this notation, the problem of selecting the optimal $\kappa_i$'s becomes
\begin{align} \label{eq:detMMSE-6}
\begin{array}{cl}
\minimize{\kappa_{i},i=1,2,\ldots, n_T} & \sum_{i=1}^{n_T}\frac{1}{1+\gamma_i\kappa_{i}} \\ %
\text{s.t.}                                 &\sum_{i=1}^{n_T}\kappa_{i}\leq \mathcal{P}.  %
\end{array}
\end{align}
Following similar steps as in the proof of Proposition 2 in \cite{KatselisKT:07}, we define the following parameter
\begin{eqnarray}\label{eq:mstar}
m_{*}&=&\max\left\{m\in \{1,2,\ldots, n_T\}: \sqrt{\frac{1}{\gamma_k}}\cdot\right. \nonumber\\ &&\left.\sum_{i=1}^{m}\sqrt{\frac{1}{\gamma_i}}-\sum_{i=1}^{m}\frac{1}{\gamma_i}< \mathcal{P}, k=1,2,\ldots,m\right\}.\nonumber\\
\end{eqnarray}
Then, it can be easily seen that for $j=1,2,\ldots, m_{*}$ the optimal $(\vect{D}_{P'})_{j,j}$ is given by the expression
\begin{eqnarray}
\sqrt{\frac{\mathcal{P}+\sum_{i=1}^{m_{*}}\frac{1}{\gamma_i}}{\sum_{i=1}^{m_{*}}\sqrt{\frac{1}{\gamma_i}}}\sqrt{\frac{1}{\gamma_j}}-\frac{1}{\gamma_j}},\nonumber
\end{eqnarray}
while $(\vect{D}_{P'})_{j,j}=0$ for $j=m_{*}+1,\ldots, n_T$.
\end{itemize}

With these expressions for the optimal power allocation, the objective of (\ref{eq:detMMSE-5}) equals
\[
n_T-m_{*}+\frac{\left(\sum_{i=1}^{m_{*}}\frac{1}{\sqrt{\gamma_i}}\right)^2}{\mathcal{P}+\sum_{i=1}^{m_{*}}\frac{1}{\gamma_i}}
\]
and therefore the problem of determining the optimal orderings $\pi(\cdot),\varpi(\cdot)$ becomes
\begin{align} \label{eq:detMMSE-7}
\begin{array}{cl}
\minimize{\pi,\varpi} & n_T-m_{*}+\frac{\left(\sum_{i=1}^{m_{*}}\frac{1}{\sqrt{\gamma_i}}\right)^2}{\mathcal{P}+\sum_{i=1}^{m_{*}}\frac{1}{\gamma_i}}. \\ %
\end{array}
\end{align}

The last problem seems to be difficult to solve analytically. Nevertheless, a simple numerical exhaustive search algorithm, namely Algorithm \ref{alg:optOrd}, can solve this problem\footnote{For easiness, we use the MATLAB notation in this table.}.

\begin{algorithm}
\caption{Optimal ordering for the eigenvalues of $\vect{R}_T'$ and $\vect{S}_Q'$, when $\vect{R}_R=\vect{S}_R$ and $\vect{R}_T^{-1}=\Ical_T$.} %
\label{alg:optOrd} %
\algsetup{indent=1.5em} %
\begin{algorithmic}[1]

\REQUIRE $n_T, B$ such that $B\geq n_T$, $\mathcal{P}$, a row vector $\vect{\lambda}_T'$ containing all $(\vect{\Lambda}_T')_{i,i}$'s for $i=1,2,\ldots,n_T$ in
any order and a row vector $\vect{\lambda}_Q$ containing all $(\vect{\Lambda}_Q)_{i,i}$'s for $i=1,2,\ldots,B$ in
any order.

\STATE \label{alg:step0} Create two matrices $\vect{\Pi}_T$ and $\vect{\Pi}_Q$ containing as rows all possible
permutations of $\vect{\lambda}_T'$ and $\vect{\lambda}_Q$, respectively. Define also the matrix $\vect{\Gamma}=[\ \ ]$.

\LOOP
\STATE for $l=1:n_T!$
\LOOP
\STATE for $t=1:B!$

\STATE $\vect{\Gamma}=\left[\vect{\Gamma}; \vect{\Pi}_T(l,:)./\vect{\Pi}_Q(t,1:n_T)\right]$.
\ENDLOOP
\ENDLOOP

\LOOP
\STATE For each row of $\vect{\Gamma}$ determine the corresponding $m_{*}$ and place it in the corresponding row of a new vector $\vect{M}$.
\ENDLOOP

\LOOP
\STATE for $l=1:n_T!B!$
\STATE \[J(l)=n_T-M(l)+\frac{\left(\sum_{i=1}^{M(l)}\frac{1}{\sqrt{\vect{\Gamma}(l,i)}}\right)^2}{\mathcal{P}+\sum_{i=1}^{M(l)}\frac{1}{\vect{\Gamma}(l,i)}}\]
\ENDLOOP

\STATE $[{\rm val},{\rm ind}]=\min{J}$

\IF{mod$({\rm ind},B!)==0$}

\STATE $j=B!$

\ELSE
\STATE $j=$mod$({\rm ind},B!)$
\ENDIF

\STATE $i=({\rm ind}-j)/B!+1$

\STATE The optimal $\pi(\cdot)$, say $\pi_{\rm opt}$, corresponds to $\vect{\Pi}_T(i,:)$ and the optimal $\varpi(\cdot)$, say $\varpi_{\rm opt}$, to $\vect{\Pi}_Q(j,:)$.
\end{algorithmic}
\end{algorithm}

Note that given the fact that $n_T$ and $B$ are small in practice, the complexity of the above algorithm and its necessary memory are not crucial. However,
as $n_T$ and $B$ increase, complexity and memory become important. In this case, a good solution may be to order the eigenvalues of $\vect{R}_T'$ in decreasing
order and those of $\vect{S}_Q'$ in increasing order. This can be analytically justified based on the fact that for a fixed $m_{*}$, the objective function of
problem (\ref{eq:detMMSE-7}), say ${\rm MSE}(\gamma_1,\ldots,\gamma_{m_{*}})$, has negative partial derivatives with respect to $\gamma_i,i=1,2,\ldots, m_{*}$ and it
is also symmetric, since any permutation of its arguments does not change its value. This essentially shows that a good solution may maintain as active $\gamma$'s the largest possible, through the selection of $m_{*}$. Additionally, the structure of ${\rm MSE}(\gamma_1,\ldots,\gamma_{m_{*}})$ reveals the fact that for every new active $\gamma$, something less than $1$ is added to the MSE, while an inactive value corresponds to adding $1$ to the MSE. This is intuitively appealing with the spatial diversity of MIMO systems and the usual properties that optimal
training matrices possess in such systems (i.e., that they tend to fully exploit the available spatial diversity). The largest possible $\gamma$'s can be achieved with a decreasing order of the eigenvalues of $\vect{R}_T'$ and an increasing order of the eigenvalues of $\vect{S}_Q'$. In this case, it can be  checked that $m_{*}$ can be found as follows
\begin{eqnarray}
m_{*}&=&\max\left\{m\in \{1,2,\ldots, n_T\}: \sqrt{\frac{1}{\gamma_m}}\cdot\right. \nonumber\\ &&\left.\sum_{i=1}^{m}\sqrt{\frac{1}{\gamma_i}}-\sum_{i=1}^{m}\frac{1}{\gamma_i}< \mathcal{P}\right\}.\nonumber
\end{eqnarray}

\item If the modal matrices of $\vect{R}_R$ and $\vect{S}_R$ are the same,
$\Ical_T=\vect{I}$ and $\Ical_R=\vect{I}$, then the optimal training is given by \cite{Bjornson:10}, as these assumptions
correspond to the problem solved therein.

\item In any other case (e.g., if $\vect{R}_R\neq\vect{S_R}$), the (optimal) training can be found using numerical methods like the semidefinite relaxation
approach described in \cite{KatselisRHB:11}. Note that this approach can handle also general $\Ical_{adm}$, not necessarily Kronecker-structured.

\end{itemize}

\bibliographystyle{IEEEtran}
\bibliography{IEEEabrv,pilotrefs,cristian}
\end{document}